\newcolumntype{P}[1]{>{\centering\arraybackslash}p{#1}}
\newtheorem{theorem}{Theorem}
\newtheorem{proposition}{Proposition}
\newtheorem{lemma}{Lemma}
\newtheorem{corollary}{Corollary}
\newtheorem{example}{Example}
\title{Quasi-deterministic $5'\rightarrow 3'$  Watson-Crick Automata}
\author{Benedek Nagy
\institute{Department of Mathematics, Faculty of Arts and Sciences, \\
      Eastern Mediterranean University, Famagusta, North Cyprus, via Mersin-10, Turkey\\}
      \email{nbenedek.inf@gmail.com}
}
\begin{document}
\maketitle

\begin{abstract}
Watson-Crick (WK) finite automata are working on a Watson-Crick tape, that is, on a DNA molecule. A double stranded DNA molecule contains two strands, each having a $5'$ and a $3'$ end, and these two strands together form the molecule with the following properties.
The strands have the same length, their $5'$ to $3'$ directions are opposite, and in each position, the two strands have nucleotides that are complement of each other (by the Watson-Crick complementary relation).
 Consequently, WK automata have two reading heads, one for each strand. In traditional WK automata both heads read the whole input in the same physical direction, but in $5'\rightarrow 3'$  WK automata the heads start from the two extremes and read the input in opposite direction. In sensing  $5'\rightarrow 3'$  WK automata, the process on the input is finished when the heads meet,
and the model is capable to accept the class of linear context-free languages.
 Deterministic variants are weaker, the class named 2detLIN, a proper subclass of linear languages is accepted by them.
Recently, another specific variants, the state-deterministic sensing  $5'\rightarrow 3'$  WK automata are investigated in which the graph of the automaton has the special property that for each node of the graph, all out edges (if any) go to a sole node, i.e., for each state  there is (at most) one state that can be reached by a direct transition. It was shown that this concept is somewhat orthogonal to the usual concept of determinism in case of sensing  $5'\rightarrow 3'$  WK automata.
In this paper a new concept, the quasi-determinism is investigated, that is in each configuration of a computation (if it is not finished yet), the next state is uniquely determined although the next configuration may not be, in case various transitions are enabled at the same time. We show that this new concept is a common generalisation of the usual determinism and the state-determinism, i.e., the class of quasi-deterministic sensing  $5'\rightarrow 3'$  WK automata is a superclass of both of the mentioned other classes.
There are various usual restrictions on WK automata, e.g., stateless or 1-limited variants.
We also prove some hierarchy results among language classes accepted by various subclasses of
quasi-deterministic sensing  $5'\rightarrow 3'$  WK automata and also some other already known language classes.
\end{abstract}

\section{Introduction}
\label{s:intro}

DNA computing \cite{Adleman,Lipton,Paun} constitutes some of the most known natural computing paradigms. Some of these models based on the very small size of DNA coding and processing the information at molecular level. Some other paradigms heavily use the structure of the DNA molecules.
Watson-Crick automata  are belonging to the latter DNA computing models. In fact, they are finite state machines that are working on double stranded tape (like a DNA molecule). The symbols located in the same position  of the double-stranded tapes are related by the Watson-Crick complementarity relation.
Watson-Crick automata are abbreviated as WK automata for short based on the two extremes of the pair of names whom they are named.
 There are various models of WK-automata, in the classical models \cite{Czeizle,Freund,Paun,Sempere1,Sempere2}, the two heads belonging to the two strands are starting from the same end of the analysed DNA molecule and moving to the same physical direction. However,
the two strands of a DNA molecule have opposite $5'\rightarrow 3'$  orientation, i.e., their biochemical direction is opposite. The reverse and the $5'\rightarrow 3'$  variants of WK automata  are more realistic in the sense, that both heads use the same biochemical (that is opposite physical) direction  \cite{Freund,NaCo,Leupold,DNA13,DNA2008}. Some variants of the reverse Watson-Crick automaton with sensing parameter which tells whether the upper and the lower heads are within a fixed distance (or meet at the same position) are discussed in \cite{DNA2008,Nagy2009,iConcept,Nagy2013}. As usual, automata are closely connected to formal language theory. Later, it was shown that the sensing  $5'\rightarrow 3'$  WK automata with sensing parameter are equivalent to a newer model without the sensing parameter \cite{NaCo-Shag,AFL}, moreover both models characterise the class LIN of linear context-free languages.

There are three concepts of determinism introduced to WK-automata \cite{detWK1,detWK2,detWK3},  due to the possible properties of the used Watson-Crick complementary relations, each of those concepts differs from the other two.
 According to the definition, in a \textit{weakly deterministic} WK-automaton in any configuration that could occur in any computation, there is at most one way to continue the computation. We believe that this concept is the closest to the original idea of determinism.
 However, in traditional WK automata, other related concepts may also be used. The second concept, the \textit{deterministic} WK-automata, have a stronger constraint than the former weakly deterministic ones, namely: (the formal description will be explained in the next section) if there are two transitions  from the same state, i.e., the automaton has both $p_1\in \delta(q,u_1,v_1)$ and  $p_2\in \delta(q,u_2,v_2)$ with some states $q,p_1,p_2$ and strings $u_1,u_2,v_1,v_2$, then none of $u_1$ and $u_2$ are prefix of each other and none of $v_1$ and $v_2$ are prefix of each other.
Finally, the third concept is defined as follows. A WK-automaton is \textit{strongly deterministic} if it is deterministic (as above), moreover the identity is used as Watson-Crick complementarity relation.

In sensing  $5'\rightarrow 3'$  WK automata the process on the input is finished when the heads meet, thus each position of the Watson-Crick tape is read by only one of the heads. Due to this fact, the complementarity relation does not really play importance in these models.  On the other hand, the complementarity can also be excluded from the classical models \cite{Kuske}, however, as we have mentioned it may play an important role in defining various types of determinism and also in complexity point of view.
 Deterministic variants of  $5'\rightarrow 3'$  WK automata are described in detail in \cite{ActInf2020,UCNC18}, they accept the sublinear class of languages 2detLIN. (As, the  sensing  $5'\rightarrow 3'$  WK automata are 2-head automata models, and they accept the linear languages,  the class 2detLIN is named in this way as it is the deterministic counterpart of the class of linear languages by this accepting model. It should be noted that 2detLIN is incomparable with detLIN, the class of linear languages accepted by deterministic one-turn pushdown automata, under set-theoretic inclusion.)
WK automata have restricted variants based on restrictions on the states and/or on the transitions. These restrictions are orthogonal to the concept of determinism.
Recently, a new and related concept, the \textit{state-determinism}, was also introduced (for finite and sensing $5'\to 3'$ WK automata, see \cite{stateDET-NaCo}). Here, we present another new concept, the \textit{quasi-determinism}. In these new models, the state of the next configuration is determined, but the next configuration may not be.
In the state-deterministic automata, the state of the next configuration depends only on the actual state, while in the quasi-deterministic automata (as we will define it formally), it may also depend on the part of the input being read.

In this paper, we consider a new type of concept of determinism, namely the quasi-determinism, which is closely related to the usual determinism in the case of finite automata. Actually, as we will see, $\lambda$-transition free quasi-deterministic NFA = DFA. On the other hand, already for NFA with $\lambda$-transitions we may allow some non-determinism that is similar to the earlier introduced state-determinism.

We also show that quasi-deterministic sensing $5'\rightarrow 3'$  WK automata are strictly more powerful than deterministic sensing $5'\rightarrow 3'$  WK automata and
state-deterministic sensing $5'\rightarrow 3'$  WK automata. However, they are not as powerful as nondeterministic sensing $5'\rightarrow 3'$  WK automata.
  Some hierarchy results among the language classes defined by the known restricted
variations such as
all-final, simple, 1-limited, and stateless $5'\rightarrow 3'$  Watson-Crick automata are shown together with some relations to REG of regular, LIN of linear context-free languages, and 2detLIN of the class accepted by deterministic sensing $5'\rightarrow 3'$  WK automata.

%
\section{Basic definitions}
\label{s:pre}
We assume that the reader is familiar with basic concepts of formal languages and automata theory,
the language classes of the Chomsky hierarchy and regular expressions,
otherwise, she or he is referred to \cite{HopUl,Handb}.
In this paper, we denote the alphabet by $T$ and the empty word by $\lambda$.

We briefly  recall the concepts of finite automata and sensing $5'\to 3'$ Watson-Crick automata.

A finite automaton is  a 5-tuple $A=(T,Q,q_0,F,\delta)$, where:
\begin{itemize}
\item $T$ is the (input) alphabet,
\item $Q$ is the finite set of states,
\item $q_0\in Q$ is the initial state,
\item $F\subseteq Q$ is the set of final (also called accepting) states and
\item $\delta$ is the transition mapping.
    \end{itemize}
If $\delta$ is written in the form  $\delta: Q \times \left(T \cup{\lambda} \right)\rightarrow 2^Q$, then $A$ is a nondeterministic finite automaton with (allowed) transitions by the empty word (NFA+$\lambda$, for short).  If $\delta$ is also in the form  $\delta: Q \times T \rightarrow 2^Q$, then $A$ is nondeterministic finite automaton without transitions by the empty word (NFA for short). Further, if $\delta: Q \times T \rightarrow Q$ is a possibly partially defined function, then $A$ is a deterministic finite automaton (DFA).

The initial configuration of a finite automaton $A$ consists of the initial state and the input word $w\in T^*$ as a pair $(q_0,w)$. The computation on an input word goes by configurations according to the transition mapping as follows: $(q,au) \Rightarrow (q',u)$ if  $q'\in\delta(q,a)$. A word $w$ is accepted if $(q_0,w) \Rightarrow^* (q_F,\lambda)$ for a state $q_F\in F$, where $\Rightarrow^*$ is a reflexive and transitive closure of $\Rightarrow$. The set of accepted words form the accepted language $L(A)$.

Notice that, as usual, the finite automata can process the input from left to right and (at most) one letter is being read in each transition (depending on the type, see above).
However, in the literature there are various cases, when  string reading is allowed, i.e., from a state a finite set of words are given for which transitions are allowed. It is well-known that this feature does not increase the ``accepting power'' of finite automata, still exactly the class of regular languages can be accepted, however, it may have some effects on complexity measures, etc.
Also there are various finite state automata models, when the automaton may read and process the input not in the strict left to right manner, e.g., automata with translucent letters \cite{CR,tr-let,JCSS,newFF} and jumping automata \cite{JUMP-Sz,
jump}.

Now we are turning to give our central concept, the sensing $5'\rightarrow3'$ WK automata a model that is also capable to process the input not only in left to right manner.

The two strands of the DNA molecule have opposite $5'\rightarrow3'$ orientations, consequently, it is natural to consider Watson-Crick finite automata that parse the two strands of the Watson-Crick tape in opposite directions. Since in sensing $5'\rightarrow3'$ WK automaton the heads sense that they are meeting and the process on the input is finished (at the latest) in this position of the heads, the complementarity relation does not really play any role %
 in this model, w.l.o.g., the identity can be used. Therefore we present a simplified (but equivalent) model in which a normal tape is used for the input. Moreover, since the head starting from the left extreme is always to the left of the other head, but the meeting final position, we may refer to the heads as left and right heads, or alternatively, as first and second heads.

Formally, a Watson-Crick automaton is considered to be a 5-tuple $A=(T,Q,q_0,F,\delta)$ similarly to the finite automata, with
\begin{itemize}
\item the (input) alphabet $T$, originally the letters standing for possible bases of the nucleotides,
\item the finite set of states $Q$,
 the initial state $q_0\in Q$ and the set of final states
$F\subseteq Q$,
\item the transition mapping $\delta$  is of the form  $\delta: Q \times T^{*} \times T^{*} \rightarrow 2^Q$, such that it is non-empty only for finitely many triplets $(q,u,v), q \in Q, u,v\in T^*$.
    \end{itemize}
Notice that there are two main differences between finite automata and Watson-Crick automata and both of them are in their transition mappings: the Watson-Crick automata have two reading heads and, further, a Watson-Crick
automaton may read strings in a transition.

\begin{figure}[h]
    \centering
        \includegraphics[scale=0.3]{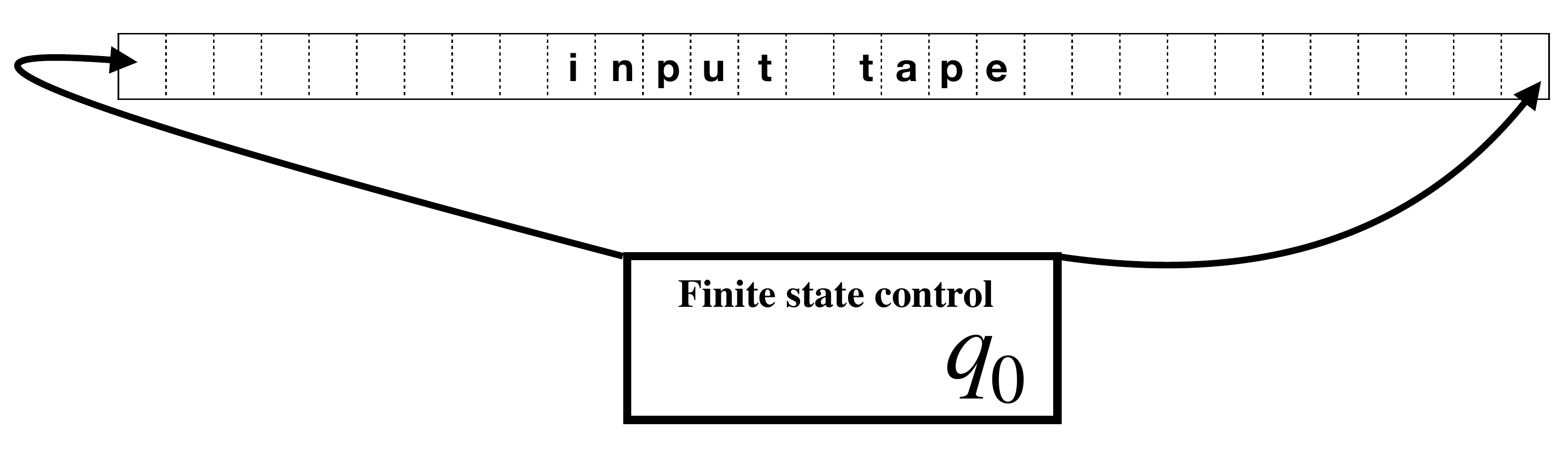}         \includegraphics[scale=0.3]{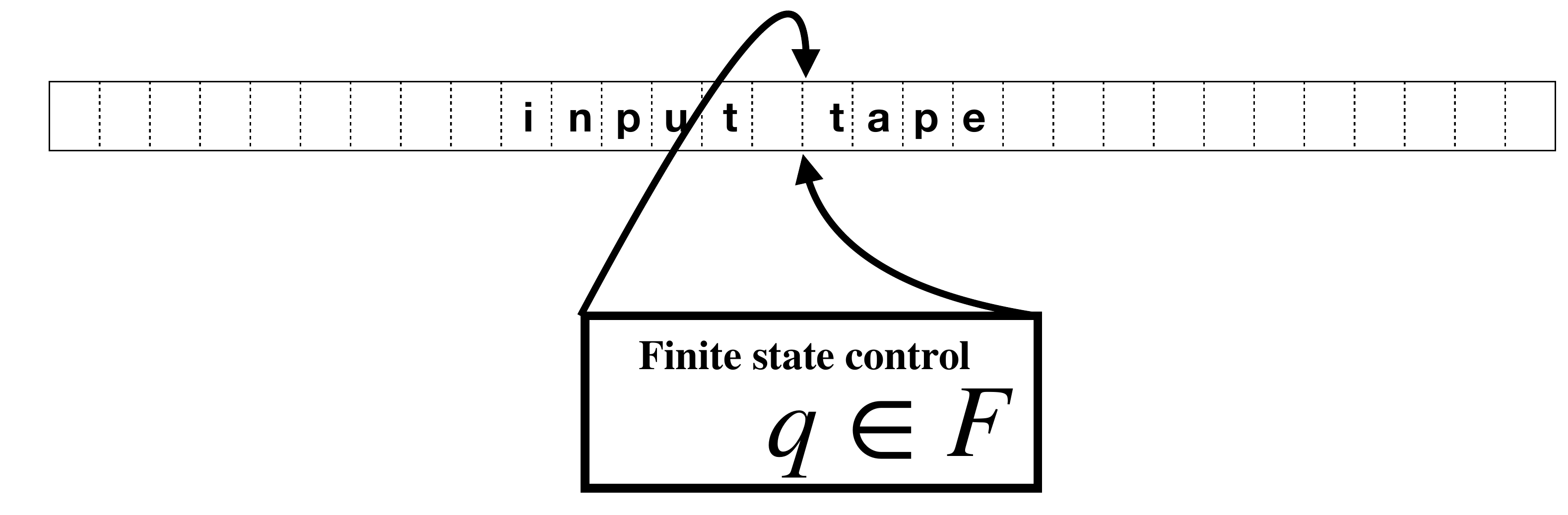}
            \caption{Schematic pictures of a sensing $5'\rightarrow 3'$ WK automaton in the initial configuration (top) and in an accepting configuration (bottom, with a final state $q$).}
    \label{sd}
\end{figure}

A configuration of a Watson-Crick automaton is a pair $(q,w)$ where $q$ is the current state of the automaton and $w$ is the part of the input word which has not been processed (read) yet. In sensing $5' \rightarrow 3'$ WK automata,
for any $w',x,y\in T^*$, $q,q'\in Q$, we write a transition (step of the computation) between two configurations as follows:
$(q,xw'y)\Rightarrow(q',w' )$ if and only if $q'\in \delta(q,x,y)$. We denote the reflexive and transitive closure of the relation $\Rightarrow$ by $\Rightarrow^*$. Therefore, for a given input $w\in T^*$, an accepting computation is a sequence of transitions $(q_0,w) \Rightarrow^* (q_F,\lambda)$, starting from the initial state and ending in a final state.

The language accepted by a WK automaton $A$ is:
$L(A)=\{ w\in T^*  \mid  (q_0,w) \Rightarrow^* (q_F,\lambda),  q_F\in F \}.$

Figure \ref{sd} shows schematic representations of an initial configuration and an accepting configuration of a sensing $5'\rightarrow3'$ WK automaton.

Since in DNA computing the empty word does not represent any molecule, in this paper, as usual in this field, we do not care about if a language contains the empty word or not, that is, we call two automata equivalent if they accept the same language modulo the empty word.
Further, the class of languages accepted by
sensing $5'\rightarrow 3'$ WK automata is exactly the class of linear context-free languages of the Chomsky hierarchy (\cite{DNA2008,Nagy2009,Nagy2013,NaCo-Shag,AFL}).

There are  restricted variants of WK automata which are defined as follows.
We say that a WK automaton is %
\begin{itemize}
\item
 \textit{stateless}, i.e., with only one state, if $Q=F=\{q_0\}$. This class of WK automata is denoted by the letter {N}, as \textbf{N}o-state.
\item %
 \textit{all-final}, i.e., with only final states, if $Q=F$.
This class of WK automata is denoted by the letter {F}, as all-\textbf{F}inal.
\item %
 \textit{simple}, when at most one head moves in a step, formally: $\delta:(Q\times ((\lambda,T^* )\cup(T^*,\lambda)))\rightarrow 2^Q$.
This class of WK automata is denoted by the letter {S}, as \textbf{S}imple.
\item %
 1-\textit{limited}, when exactly one letter is being read in each step of the computations, i.e., $\delta:(Q\times ((\lambda,T)\cup (T,\lambda)))\rightarrow2^Q$.
This class of WK automata is denoted by {1}, as \textbf{1}-limited.
\end{itemize}
Obviously, by definition, every N sensing $5'\rightarrow 3'$ WK automaton is an F sensing $5'\rightarrow 3'$ WK automaton, and also, every 1 sensing $5'\rightarrow 3'$ WK automaton is an S sensing $5'\rightarrow 3'$ WK automaton. %
There are also independent restrictions;  their combinations are
the {F1}, {N1}, {FS}, {NS} WK automata.

There is one usual definition of (classical) determinism in the case of sensing $5'\to 3'$ WK automata:
We say that a $5'\to 3'$ WK automaton is \textit{deterministic}, if there is at most one transition (computation step) is applicable in each configuration.
 It is easy to see that this condition implies the fact that
   at each possible configuration $(q,w)$ in any computation ($q\in Q, w\in T^*$), there is at most one configuration $(p,u)$ such that $(q,w) \Rightarrow (p,u)$.
Somewhat similarly, as at DFA, when deterministic WK automata are mentioned, we may use the letter D to denote this fact. Remember that the language class accepted by deterministic
sensing $5'\rightarrow 3'$ WK automata is a proper subset of LIN and denoted by 2detLIN \cite{ActInf2020,UCNC18}.

Now, we recall a related concept, the concept of state-determinism from \cite{stateDET-NaCo}.
An automaton (finite or Watson-Crick) is \textit{state-deterministic} if for each of its state $q\in Q$,
if there is a transition from $q$ and it goes to state $p$ (i.e., $p\in\delta(q,a)$ in the case of finite automata with $a\in T\cup\{\lambda\}$ and  $p\in\delta(q,u,v)$ in case of WK-automata with $u,v\in T^*$), then every transition from $q$ goes to $p$.
We may say that if an automaton (finite or Watson-Crick) has state $q$ in its configuration, then, if the process continues, the state of the next configuration is uniquely determined and it is $p$. State-deterministic automata will be denoted by the prefix sD.

Now, we are ready to define our new concept that plays a central role in this paper.
An automaton (finite or Watson-Crick) is \textit{quasi-deterministic} if the following condition holds. For each its possible configuration $(q,w)$ (with $q\in Q, w\in T^*$),
if $(q,w) \Rightarrow (p,u)$ and $(q,w) \Rightarrow (r,v)$ then $p=r$ must hold.
 In other words, there is a unique state $p$ such that
if there is a transition allowed in the configuration $(q,w)$, then it must go to a configuration in which the state is $p$.
For quasi-deterministic automata, we use the qD prefix in the sequel.

Automata are usually represented by their graphs, thus we may freely use graph theoretical concepts here.
Further, we assume that each state of the automaton appears in an accepting path of a word of the accepted language. A transition from a state to itself is represented by a loop edge.
Clearly, in a stateless automaton, there are only loop edges (if any).

\section{On quasi-deterministic finite automata}

First, for analogy, we show how this new concept can be understood in the case of traditional finite state automata. %

The next proposition shows how quasi-determinism relates to determinism when finite automata are considered.

\begin{proposition}
Let $A$ be a $\lambda$-transition free NFA. If $A$ is quasi-deterministic, then it is deterministic. Moreover, each DFA is a $\lambda$-free NFA that is quasi-deterministic.
\end{proposition}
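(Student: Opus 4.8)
The plan is to prove the two implications separately, the first one by contraposition.

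For the first claim, I would suppose that $A=(T,Q,q_0,F,\delta)$ is a $\lambda$-transition free NFA that is \emph{not} deterministic, i.e., that $\delta$ is not a (partial) function: there are a state $q\in Q$, a letter $a\in T$, and two distinct states $p,r\in\delta(q,a)$. First I would observe that, since $A$ is $\lambda$-free, every computation step reads exactly one letter, so from a configuration $(q,aw')$ the only possible successors are configurations of the form $(s,w')$ with $s\in\delta(q,a)$. Next I would exhibit a \emph{possible} configuration with first component $q$ on which both transitions $q\to p$ and $q\to r$ are enabled. By the standing assumption that every state occurs on an accepting path, $q$ is reachable, say $(q_0,u)\Rightarrow^*(q,\lambda)$ for some $u\in T^*$; since the part of the computation reading $u$ does not depend on what follows $u$, we also get $(q_0,ua)\Rightarrow^*(q,a)$, so $(q,a)$ is a possible configuration. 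From it we obtain $(q,a)\Rightarrow(p,\lambda)$ and $(q,a)\Rightarrow(r,\lambda)$ with $p\neq r$, contradicting quasi-determinism. Hence a quasi-deterministic $\lambda$-free NFA must have $\delta$ a partial function $Q\times T\to Q$, i.e., it is a DFA.

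For the second claim, I would view a DFA $A$ with (partial) transition function $\delta: Q\times T\to Q$ as an NFA by reading $\delta(q,a)$ as the singleton $\{\delta(q,a)\}$ (or as $\emptyset$ when $\delta(q,a)$ is undefined); it has no $\lambda$-transitions by construction. Quasi-determinism is then immediate: from any configuration $(q,w)$, if $w=\lambda$ no step applies, and if $w=aw'$ the only possible step is $(q,aw')\Rightarrow(\delta(q,a),w')$ (when defined), so there is at most one successor configuration, hence \emph{a fortiori} at most one successor state. This in fact shows the stronger statement that every DFA is deterministic in the sense of the excerpt, which trivially entails quasi-determinism.

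The argument is short and contains no real obstacle; the one point that requires a little care is the middle step of the first part — justifying that $(q,a)$ (or any $(q,aw')$) is genuinely a configuration reachable from an initial configuration, so that the quasi-determinism condition, which is quantified over \emph{possible} configurations only, may be invoked. This is precisely where the paper's running convention that all states are useful (in particular reachable), combined with the fact that in a $\lambda$-free model the residual input can be extended arbitrarily without disturbing the prefix of the computation that leads to $q$, does the work.
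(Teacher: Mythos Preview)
Your proof is correct and rests on the same key observation as the paper's: in a $\lambda$-free NFA each step reads exactly one letter, so the current state together with the first remaining input letter determines the next state, which is precisely the DFA condition. The paper states this directly in one sentence, whereas you package the first implication as a contrapositive and add a reachability argument for the witnessing configuration; that extra care is harmless but unnecessary here, since the paper's notion of ``possible configuration'' is simply any pair $(q,w)\in Q\times T^*$, not a reachable one.
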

\begin{proof}
  Since $\lambda$-transition free NFA read exactly one input letter in each transition,
     for a quasi-determin\-istic NFA,
   the first letter of the remaining input in the configuration and the actual state together uniquely determines the next state exactly in the same way as in a DFA by definition.
\end{proof}

The new concept also relates to state-determinism in finite automata:

\begin{proposition}
Let $A$ be a state-deterministic (NFA+$\lambda$) finite automaton, then $A$ is also quasi-determin\-istic.
\end{proposition}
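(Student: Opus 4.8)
The plan is to unwind the two definitions directly; the statement will then fall out almost immediately. Fix an arbitrary configuration $(q,w)$ that occurs in some computation of $A$, and suppose that both $(q,w)\Rightarrow(p,u)$ and $(q,w)\Rightarrow(r,v)$ hold. By the definition of a computation step of an NFA$+\lambda$, the first of these is witnessed by some $a\in T\cup\{\lambda\}$ with $w=au$ and $p\in\delta(q,a)$, and the second by some $b\in T\cup\{\lambda\}$ with $w=bv$ and $r\in\delta(q,b)$. Thus both $p$ and $r$ are targets of transitions that leave the state $q$.

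Now I would invoke state-determinism: since $A$ is state-deterministic, there is (at most) one state $s$ with the property that every transition leaving $q$ goes to $s$. Applying this to the two transitions above gives $p=s$ and $r=s$, hence $p=r$. As $(q,w)$ was an arbitrary reachable configuration, this is exactly the condition defining quasi-determinism, so $A$ is quasi-deterministic.

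The only subtlety worth spelling out — and the reason this gives an inclusion rather than an equivalence — is that nothing in the argument forces $u=v$. The two enabled transitions may consume different letters, or one of them may be a $\lambda$-transition, so the remaining inputs can genuinely differ even though the next state cannot. Quasi-determinism tolerates precisely this discrepancy, whereas state-determinism imposes the stronger structural constraint on the whole transition relation. Consequently there is no real obstacle here: the direction asked for is a one-line consequence of the definitions, and the interesting content lies elsewhere (namely in the examples showing the inclusion is strict, and in the Watson-Crick setting treated later).
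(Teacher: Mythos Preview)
Your proof is correct and is exactly the approach the paper takes: the paper simply says the implication is ``obvious from the definitions of these models,'' and what you have written is the explicit unwinding of those definitions. Your additional remark about $u$ and $v$ possibly differing is accurate commentary but not needed for the proposition itself.
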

\begin{proof}
  Obvious from the definitions of these models.
\end{proof}

As we have seen, quasi-determinism is a close connection to determinism and also to state-deter\-minism. Now to show that they are not exactly the same, %
we present 
an example.

\begin{example}\label{ex1}
Consider the finite automaton shown in Figure~\ref{fig-ex1}.
\begin{figure}[h]
    \centering
        \includegraphics[scale=0.75]{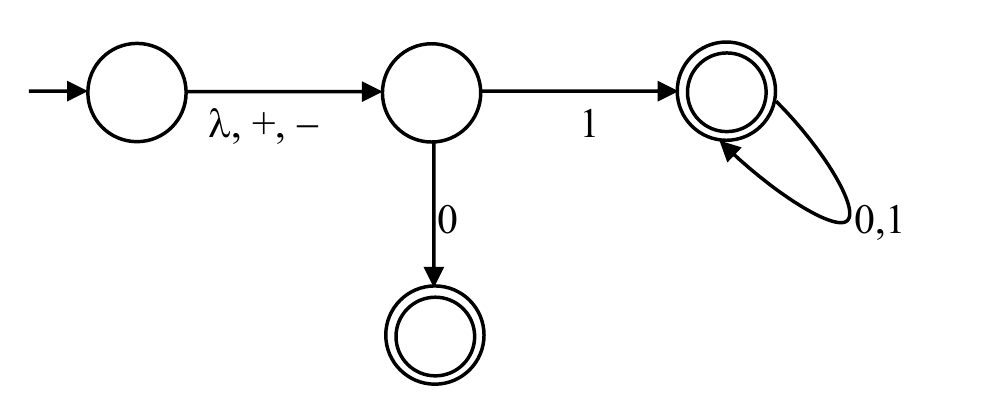}
            \caption{A finite automaton that is not deterministic, but quasi-deterministic.}
    \label{fig-ex1}
\end{figure}
It accepts the language of binary integers over the alphabet $\{+,-,0,1\}$.
On the one hand, it is not a DFA due to the transition by $\lambda$ from its initial state.
Moreover, it is not state-deterministic, as there is a state for which reading $0$ and $1$ lead to different states.
On the other hand, it is easy to check that it is quasi-deterministic:
 from the initial state, all the three possible transitions (including the one with $\lambda$) go to the same state. From the other states, the transitions are deterministic transitions.
\end{example}

Now, let us give a technical result about quasi-deterministic
finite automata. In fact, their states behave similarly to the states of the deterministic or to the states of state-deterministic %
automata.

\begin{lemma}
Let $A = (T,Q,q_0,F,\delta)$ be a quasi-deterministic
 finite %
NFA+$\lambda$ automaton. Then, the set of states can be partitioned into two sets as
$Q=Q_d \cup Q_s$ ($Q_d \cap Q_s = \emptyset$)
 such that
$$Q_d = \{q~|~ \emptyset =\delta(q,\lambda) \text{ and for any } a\in T, \text{ there is at most one state } p \text{ such that } p\in \delta(q,a) \}$$ and
$$Q_s = \{q~|~ \{p\} = \delta(q,\lambda) \text{ and } %
\text{ if } q' \in \delta(q,a) \text{ for some } a\in T, \text{ then } q' = p\}.$$
\end{lemma}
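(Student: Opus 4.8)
The plan is to lean on the standing assumption, stated just before the lemma, that every state of $A$ occurs on some accepting path and is therefore reachable from $q_0$. The first step I would record as a simple observation: \emph{every} pair $(q,w)$ with $q\in Q$ and $w\in T^{*}$ is a possible configuration. Indeed, if $(q_0,x)\Rightarrow^{*}(q,\lambda)$ for some prefix $x$ witnessing the reachability of $q$, then on input $xw$ we have $(q_0,xw)\Rightarrow^{*}(q,w)$; hence quasi-determinism may be instantiated at $(q,w)$ for an arbitrary $w$ of our choosing. This is the only real leverage needed, and the rest is unfolding of definitions.

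Next I would show that $|\delta(q,\lambda)|\le 1$ for every state $q$: if $p_{1},p_{2}\in\delta(q,\lambda)$, then from the possible configuration $(q,\lambda)$ (or any $(q,w)$) we get $(q,\lambda)\Rightarrow(p_{1},\lambda)$ and $(q,\lambda)\Rightarrow(p_{2},\lambda)$, so $p_{1}=p_{2}$ by quasi-determinism. This yields the dichotomy $\delta(q,\lambda)=\emptyset$ or $\delta(q,\lambda)=\{p\}$, which I take as the definition of the partition: put $q\in Q_{d}$ precisely when $\delta(q,\lambda)=\emptyset$, and $q\in Q_{s}$ otherwise. Disjointness of $Q_{d}$ and $Q_{s}$ and the equality $Q=Q_{d}\cup Q_{s}$ are then immediate.

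It then remains to verify the two defining clauses. For $q\in Q_{d}$ and any $a\in T$, suppose $p_{1},p_{2}\in\delta(q,a)$; applying one computation step to the possible configuration $(q,a)$ gives $(q,a)\Rightarrow(p_{1},\lambda)$ and $(q,a)\Rightarrow(p_{2},\lambda)$, hence $p_{1}=p_{2}$, so $q$ has at most one $a$-successor, as required. For $q\in Q_{s}$ with $\delta(q,\lambda)=\{p\}$, suppose $q'\in\delta(q,a)$ for some $a\in T$; from the possible configuration $(q,a)$ we have both $(q,a)\Rightarrow(q',\lambda)$ by reading $a$ and $(q,a)\Rightarrow(p,a)$ by the $\lambda$-move, so quasi-determinism forces $q'=p$. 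This establishes both memberships and finishes the proof.

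I expect the only delicate point to be the very first one: the argument repeatedly instantiates quasi-determinism at configurations of the shape $(q,\lambda)$ and $(q,a)$, and this is legitimate exactly because the "every state is useful/reachable" convention guarantees these are possible configurations; dropping that convention would only give the statement restricted to configurations that actually occur in a computation. Everything after that observation is routine case analysis, so no further obstacle is anticipated.
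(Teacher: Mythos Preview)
Your proposal is correct and follows essentially the same route as the paper's proof: a case split on whether $\delta(q,\lambda)$ is empty, followed by applying quasi-determinism to suitable configurations to force uniqueness of successors. The only difference is that you are more explicit than the paper about why the configurations $(q,\lambda)$ and $(q,a)$ are \emph{possible} configurations (invoking the standing reachability assumption), whereas the paper simply applies quasi-determinism ``from any configuration'' without spelling this out.
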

\begin{proof}
Let $A$ be an NFA+$\lambda$ automaton. Then, for any of its states $q\in Q$ there are two options, either $\emptyset \ne \delta(q,\lambda)$ or $\emptyset =\delta(q,\lambda)$.
In the latter case, there could be transitions from $q$ only by reading an input letter. If $A$ is quasi-deterministic, then from any configuration (with state $q$ having the first letter $a$ of the remaining input), the computation should go to a determined state (to $p$ in this case), if any (otherwise, the computation gets stuck and not accepting).
In the former case, by not reading any input letter, $A$ may enter a configuration $(p,w)$ from $(q,w)$ if $p\in\delta(q,\lambda)$. Since $A$ is quasi-deterministic, it may not happen that there is a $p'\in Q$ such that $p'\in\delta(q,\lambda)$ and $p\ne p'$.
Further, if there are transitions from $q$ by reading a letter, let us say $a\in T$, it must also go to state $p$.
\end{proof}

If $Q=Q_d$ in the previous lemma, then, in fact, the quasi-deterministic $A$ is also deterministic. Moreover, if $A$ is deterministic, then it is a quasi-deterministic finite automaton with the property $Q=Q_d$.
On the other hand, if $A$ is quasi-deterministic with $Q=Q_s$,
then $A$ is also state-deterministic. However, for a state-deterministic $A$, it may happen that it is quasi-deterministic, but $Q_d \ne \emptyset$.
 In this case, there are some states $q$ such that $\lambda$-transitions are not defined on those, but all the transitions from $q$ go to a unique state $p$ of the automaton.

As both the classes of DFA and NFA+$\lambda$ accept exactly the class of regular languages,
it is straightforward to see that the class of quasi-deterministic finite automata, denoted by qDFA, is also characterizing the same class of languages. From this point of view, the new model does not seem to give a new power.  However, from a complexity point of view, the classes of DFA and NFA have a remarkable difference, as in some cases, exponentially more states are needed for a DFA to accept the same regular language as an NFA needs. As qDFA allows a little non-determinism, it is an interesting challenge to see how the descriptional complexities of DFA, qDFA, NFA and NFA+$\lambda$ are related to each other. We leave this topic for a future work and in this paper, we concentrate on $5'\to 3'$ WK automata, in which the quasi-determinism
gives more freedom due to the two heads.

Observe that if a sensing $5'\rightarrow 3'$ WK automaton has transitions only with its left head (the right head is always reading $\lambda$), then, in fact, it is equivalent to a finite automaton. If this finite automaton is quasi-deterministic, then so is the $5'\rightarrow 3'$ WK automaton. We may use this link between finite automata and WK automata in the next section.

\section{On quasi-deterministic sensing $5'\to 3'$ WK automata}
\label{s:WK}

We start this section by showing that quasi-determinism is a generalisation of both
determinism and state-determinism also in the case of sensing $5'\rightarrow 3'$ WK automata.
The following statements are %
 consequences of the definitions of the used types of determinism.

\begin{proposition}
Let $A$ be a deterministic sensing $5'\rightarrow 3'$ WK automaton, then $A$ is quasi-deterministic.
\end{proposition}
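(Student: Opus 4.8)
The plan is to argue straight from the definitions, exploiting the fact that the deterministic condition is a priori stronger than the quasi-deterministic one. Recall that a sensing $5'\rightarrow 3'$ WK automaton $A$ is deterministic when in every configuration that can occur in a computation at most one computation step is applicable, and the excerpt already records the consequence that for every possible configuration $(q,w)$ there is at most one configuration $(p,u)$ with $(q,w)\Rightarrow(p,u)$. Quasi-determinism asks only for the weaker conclusion that the \emph{state} component of the successor is uniquely determined whenever some step is enabled.

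First I would fix an arbitrary possible configuration $(q,w)$ of $A$ (with $q\in Q$, $w\in T^*$) and assume $(q,w)\Rightarrow(p,u)$ and $(q,w)\Rightarrow(r,v)$. Applying the determinism of $A$ in the strong form just recalled, the two successor configurations coincide as pairs, so $(p,u)=(r,v)$ and in particular $p=r$. Since $(q,w)$ was arbitrary among the possible configurations, this is exactly the defining condition of quasi-determinism, so $A$ is quasi-deterministic.

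I expect no genuine obstacle here: the statement is a routine unfolding of the two definitions, and the presence of two heads and of string-valued transitions is irrelevant, because quasi-determinism constrains only the state, which is already pinned down by the uniqueness of the entire successor configuration. The only point deserving a line of care is the quantifier ``for each possible configuration'': it must range over the configurations reachable from $(q_0,w)$ for some input $w$, precisely the same scope used in the definition of determinism, so that the inclusion of conditions transfers verbatim. No converse is claimed, so nothing more is required.
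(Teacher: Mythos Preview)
Your proposal is correct and follows essentially the same approach as the paper's own proof: both observe that determinism gives at most one successor configuration, hence in particular at most one successor state, which is precisely quasi-determinism. Your version is simply more explicit about the quantifiers and the unfolding of the definitions.
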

\begin{proof}
In deterministic sensing $5'\rightarrow 3'$ WK automata, for each possible configuration of a computation, there is at most one next configuration, and, thus, its state is the only state that can be reached from the given configuration in one computation step.
\end{proof}

\begin{proposition}
Let $A$ be a state-deterministic sensing $5'\rightarrow 3'$ WK automaton, then $A$ is quasi-determin\-istic.
\end{proposition}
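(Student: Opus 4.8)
The plan is to argue directly from the definitions, exactly in the spirit of the earlier proposition stating that every state-deterministic finite automaton is quasi-deterministic. First I would unpack what state-determinism provides for a sensing $5'\rightarrow 3'$ WK automaton: for every state $q\in Q$ there is (at most) one state $p$ such that every defined transition out of $q$, i.e.\ every $p'\in\delta(q,u,v)$ with $u,v\in T^*$, satisfies $p'=p$. In particular, the target state of any move leaving $q$ does not depend on which strings $u,v$ are consumed by the left and right heads.

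Next I would take an arbitrary configuration $(q,w)$ that occurs in some computation and suppose both $(q,w)\Rightarrow(p,u)$ and $(q,w)\Rightarrow(r,v)$. By the definition of a computation step, the first move means $w=x_1uy_1$ for some $x_1,y_1\in T^*$ with $p\in\delta(q,x_1,y_1)$, and the second means $w=x_2vy_2$ for some $x_2,y_2\in T^*$ with $r\in\delta(q,x_2,y_2)$. Hence $p$ and $r$ are both target states of transitions leaving the single state $q$, so state-determinism forces $p=r$ (both equal the unique successor state of $q$). This is precisely the condition defining quasi-determinism, so $A$ is quasi-deterministic.

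I do not expect any genuine obstacle here: the claim is an immediate consequence of the fact that state-determinism constrains the transition relation more tightly than quasi-determinism does. The only point worth stressing is that quasi-determinism restricts only the \emph{state} of the next configuration, never the next configuration itself, so at no stage do we need to relate $u$ and $v$ (equivalently, $x_1,y_1$ and $x_2,y_2$); distinct transitions out of $q$ consuming different strings are perfectly compatible with state-determinism and would in general leave different portions of the input unread, which is exactly why only the ``same next state'' conclusion is available and needed.
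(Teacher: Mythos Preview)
Your argument is correct and follows exactly the same route as the paper: a direct unpacking of the definitions showing that the state-determinism condition (all transitions out of a given state share the same target) immediately implies the quasi-determinism condition (from any configuration, only one next state is possible). The paper's own proof is a one-sentence version of what you wrote; your explicit decomposition of the two computation steps into $p\in\delta(q,x_1,y_1)$ and $r\in\delta(q,x_2,y_2)$ simply spells out the obvious step the paper leaves implicit.
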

\begin{proof}
In state-deterministic sensing $5'\rightarrow 3'$ WK automata, for each possible configuration of a computation, there is at most one state that can appear in any of the next configurations, and, thus, this unique state (if any) is the one that proves the quasi-determinism.
\end{proof}

At this point, we can deduct that the language class accepted by  quasi-deterministic
sensing $5'\rightarrow 3'$ WK automata contains 2detLIN, the language class accepted by
deterministic sensing $5'\rightarrow 3'$ WK automata and also the language class
accepted by state-deterministic sensing $5'\rightarrow 3'$ WK automata. Now we show that both of these inclusions are proper.

\begin{example}\label{ex-stD-nonD}
  Let us consider the language $L_o=\{a^m  b^n ~|~ m\leq n\leq 2m\}$. This language is accepted by a $5'\rightarrow 3'$ WK automaton having only one state and two loop transitions with $(a,b)$ and $(a,bb)$. Obviously this automaton is quasi-deterministic, always only its sole state can appear in any next configurations. On the other hand, this language is not in 2detLIN, a deterministic sensing $5'\rightarrow 3'$ WK automaton has no chance to guess how the numbers of $a$-s and $b$-s are related to each other. More formally, the proof may go by contradiction. Let us assume that a deterministic
sensing $5'\rightarrow 3'$ WK automaton $A$ with initial state $q$ accepts $L_o$. Let $k$ be the number of states of $A$ and let $r$ be the maximal length of the words $A$ may read in a transition (radius of $A$). Let $m=3kr$. Consider the words $a^m b^m$ and $a^mb^{2m}$. Both are in $L_o$, however their accepting computation must start exactly in the same way in the first $2k$ steps ($A$ may read at most $2kr$ $a$-s by the first head and at most that many $b$-s by the second head during this part of the computation). Thus there is a state
which appears more than one configurations in this part of the computation, let it be $p$ (may be the same as $q$). This part of the computations on the two above words can be written as $(q,a^m b^m) \Rightarrow^* (p,a^{m-i_1}b^{m-j_1}) \Rightarrow^* (p,a^{m-i_2}b^{m-j_2}) \Rightarrow^* (f_1,\lambda)$
and $(q,a^m b^{2m}) \Rightarrow^* (p,a^{m-i_1}b^{2m-j_1}) \Rightarrow^* (p,a^{m-i_2}b^{2m-j_2}) \Rightarrow^* (f_2,\lambda)$ with accepting states $f_1,f_2$.
 Let us analyse the relation of $i_2-i_1$ and $j_2-j_1$, i.e., the number of $a$-s and $b$-s read in the cycle.
\begin{itemize}
\item If $i_2-i_1 > j_2-j_1$, i.e., more $a$-s are read than $b$-s, then we also have the accepting computation $(q,a^{m+i_2-i_1} b^{m+j_2-j_1}) \Rightarrow^* (p,a^{m+i_2-2i_1}b^{m+j_2-2j_1}) \Rightarrow^* (p,a^{m-i_1}b^{m-j_1}) \Rightarrow^* (p,a^{m-i_2}b^{m-j_2}) \Rightarrow^* (f_1,\lambda)$ contradicting to the fact that the word $a^{m+i_2-i_1} b^{m+j_2-j_1}$ is not in $L_o$.
\item In the case $i_2-i_1 = j_2-j_1$, i.e., the computation reads the same number of $a$-s and $b$-s in the cycle, we have the accepting computation
    $(q,a^{m-(i_2-i_1)} b^{2m-(i_2-i_1)}) %
    \Rightarrow^* (p,a^{m-i_2}b^{2m-j_2}) \Rightarrow^* (f_2,\lambda)$.
    However, in this case, $m-(i_2-i_1) = m-i_2+i_1$ is less than the half of $2m-(i_2-i_1) = 2m -i_2+i_1$, and thus, $a^{m-(i_2-i_1)} b^{2m-(i_2-i_1)}$ is not in $L_o$.
\item Finally, if $i_2-i_1 < j_2-j_1$, then there is an accepting computation
   $(q,a^{m-(i_2-i_1) }b^{m-(j_2-j_1) }) \Rightarrow^* %
   (p,a^{m-i_2}b^{m-j_2}) \Rightarrow^* (f_1,\lambda)$, however, the word $a^{m-(i_2-i_1) }b^{m-(j_2-j_1) }$ contains more $a$-s than $b$-s showing the contradiction in this case.
\end{itemize}
\end{example}

\begin{example}\label{ex-bab+b} Let us consider the regular
 language  $ b^*ab^* + b^* $. We show that it is not accepted by any state-deterministic
sensing $5'\rightarrow 3'$ WK automata. The proof goes by contradiction, thus,
assume that there is a state-deterministic sensing $5'\rightarrow 3'$ WK automaton $A$ accepting
$ b^*ab^* + b^* $. Since the language is infinite, $A$ must have a cycle. Definitely, $A$ has a transition where letter $a$ (or a string containing letter $a$) is being read by either head and let us refer to this head by $h$. Considering this transition, it cannot be in the cycle, since it would allow to read more than one $a$ in an accepted word. However, in a state-deterministic automaton,
 there are only finitely many states that are not in the cycle, and these states can only be visited during a computation only before the computation enters to the cycle. Thus, the mentioned transition must be from a state which can be visited before the computation enters into the cycle, let us refer to this state by  $p$. However, in the finitely many possible computations  connecting the initial state $q_0$ to $p$, head $h$ can read at most a given finite number, let us say $j$, of $b$-s (depending both on the numbers of states in this path, and also on the length of the read words of the transitions of the path). But, then it would be impossible for $A$ to accept a word such that its prefix (if $h$ is the left head) or its suffix (in case $h$ is the right head) has more $b$-s (before/after the $a$, respectively) than  $j$. However, as the language $ b^*ab^* + b^* $ has such words,  a contradiction is obtained, and the proof is finished.

Now, on the other hand, the language $ b^*ab^* + b^* $ is a regular language, and thus, it is  accepted by some DFA. And a DFA can also be seen as a specific quasi-deterministic
sensing $5'\rightarrow 3'$ WK automaton obtaining the second part of the proof.
\end{example}

Based on the previous propositions and examples, we have proven the following result.

\begin{theorem}
  The class of languages accepted by  quasi-deterministic
sensing $5'\rightarrow 3'$ WK automata is a proper superset of both 2detLIN and the class of languages accepted by  state-deterministic
sensing $5'\rightarrow 3'$ WK automata.
\end{theorem}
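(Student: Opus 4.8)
The plan is to assemble the statement from the two immediately preceding propositions (which give the inclusions) together with the two worked examples (which witness properness), and then argue that nothing else is needed.

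First I would record the inclusions. Proposition~5 (deterministic $\Rightarrow$ quasi-deterministic) and Proposition~6 (state-deterministic $\Rightarrow$ quasi-deterministic) show that every language in $\mathrm{2detLIN}$ and every language accepted by a state-deterministic sensing $5'\to 3'$ WK automaton is also accepted by some quasi-deterministic one. Hence the class accepted by quasi-deterministic sensing $5'\to 3'$ WK automata is a (not necessarily proper) superset of both classes in question; this part is purely definitional.

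Next I would establish the two properness claims by exhibiting separating languages. For $\mathrm{2detLIN}$, take $L_o = \{a^m b^n \mid m \le n \le 2m\}$ from Example~\ref{ex-stD-nonD}: it is accepted by a one-state WK automaton with loop transitions $(a,b)$ and $(a,bb)$, which is quasi-deterministic since its unique state is the only state reachable in any step, yet $L_o \notin \mathrm{2detLIN}$ by the pumping-type contradiction argument already given (a deterministic automaton with $k$ states and radius $r$ cannot distinguish $a^m b^m$ from $a^m b^{2m}$ in its first $2k$ steps, so a repeated state in that common prefix can be pumped to produce a word violating the $m \le n \le 2m$ constraint in each of the three cases for how many $a$'s versus $b$'s the cycle consumes). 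For the state-deterministic class, take the regular language $b^* a b^* + b^*$ from Example~\ref{ex-bab+b}: it is regular, hence accepted by a DFA, hence by a quasi-deterministic sensing $5'\to 3'$ WK automaton (viewing the DFA as a WK automaton whose right head always reads $\lambda$), but it is not accepted by any state-deterministic sensing $5'\to 3'$ WK automaton, because such an automaton has only finitely many states outside its cycle, the (unique, since only one $a$ may be read) $a$-reading transition must emanate from one of those pre-cycle states, and along the finitely many pre-cycle computations the relevant head can read only boundedly many $b$'s, contradicting the unboundedness of the $b$-block adjacent to the $a$.

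The routine inclusions are immediate; the substance — and the only place where real work is needed — is verifying the two non-membership claims, and that work is precisely what Examples~\ref{ex-stD-nonD} and~\ref{ex-bab+b} carry out. So the theorem follows by combining Propositions~5 and~6 with those two examples, and I would phrase the proof simply as ``immediate from Propositions~5 and~6 (for the inclusions) and Examples~\ref{ex-stD-nonD} and~\ref{ex-bab+b} (for the strictness).''
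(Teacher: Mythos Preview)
Your proposal is correct and mirrors the paper's own treatment exactly: the paper states the theorem immediately after Examples~\ref{ex-stD-nonD} and~\ref{ex-bab+b} with the remark ``Based on the previous propositions and examples, we have proven the following result,'' which is precisely the assembly you describe. The only slip is the numbering---the relevant propositions in the paper are Propositions~3 and~4, not~5 and~6.
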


As we already mentioned, in the case of finite automata, quasi-determinism may play only a role in complexity, as, in fact, the class REG of regular languages is characterized by
DFA, NFA, NFA+$\lambda$ and also by qDFA. Contrary to this, we have seen that qD
sensing $5'\rightarrow 3'$ WK automata are more powerful than deterministic sensing $5'\rightarrow 3'$ WK automata. Now, we show that qD sensing $5'\rightarrow 3'$ WK automata
are still not as powerful as the general nondeterministic sensing $5'\rightarrow 3'$ WK automata.

\begin{theorem}
  The class of languages accepted by  quasi-deterministic
sensing $5'\rightarrow 3'$ WK automata is a proper subset of LIN, the class of languages accepted by  nondeterministic
sensing $5'\rightarrow 3'$ WK automata.
\end{theorem}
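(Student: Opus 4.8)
The plan is to establish the two inclusions separately. That every language accepted by a quasi-deterministic sensing $5'\rightarrow 3'$ WK automaton lies in LIN is immediate: such an automaton is simply a sensing $5'\rightarrow 3'$ WK automaton obeying one extra constraint, and the whole (nondeterministic) model accepts exactly LIN, as recalled in Section~\ref{s:pre}. So the content is the strictness. I would exhibit the concrete linear language $L' = \{a^n b^n \mid n\ge 1\}\cup\{a^n b^{2n}\mid n\ge 1\}$, which is linear (e.g.\ via $S\to S_1\mid S_2$, $S_1\to aS_1b\mid ab$, $S_2\to aS_2bb\mid abb$), and prove that no quasi-deterministic sensing $5'\rightarrow 3'$ WK automaton accepts it.

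Assume such an automaton $A$ accepts $L'$, with $k$ states and with $r$ the maximal length of a word read in a single transition. The crucial point — and where quasi-determinism really bites — is that on the \emph{pure} part of a long input $a^N b^m$ the state dynamics of $A$ are forced to be deterministic. Indeed, if the remaining input is $a^i b^j$ with $i,j>r$, then in configuration $(q,a^ib^j)$ every enabled transition reads a prefix of the form $a^p$ and a suffix of the form $b^q$; hence the \emph{set} of enabled transitions depends on $q$ only, and by quasi-determinism they all lead to one and the same successor state $\tau(q)$. (If $\tau(q_0)$ were undefined, or became undefined along the way, then $A$ could process no long $a^Nb^m$ at all and $L(A)$ would be finite, contradicting $L(A)=L'$.) Therefore, as long as the left head is still inside the $a$-block and the right head inside the $b$-block, every computation is driven through the ultimately periodic state sequence $q_0,\tau(q_0),\tau^2(q_0),\dots$, whose preperiod is $<k$ and whose period is some $c\ge 1$; write the cycle of states as $q_a,q_{a+1},\dots,q_b=q_a$ with $a<b\le k$ — a datum depending on $A$ only, not on the input.

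Next I would look at a \emph{shortest} accepting computation on $a^Nb^N$ and on $a^Nb^{2N}$ for a sufficiently large $N$. Both are forced through the same state cycle during the steps indexed $a,\dots,b$, and in a shortest computation that cycle must consume input (otherwise it could be excised), so it reads a nonempty block pair — say $(\alpha_0,\beta_0)$ for $a^Nb^N$ and $(\alpha_1,\beta_1)$ for $a^Nb^{2N}$. Pumping such a cycle (repeating its transitions on the correspondingly lengthened input and then resuming the original computation, exactly in the style of Example~\ref{ex-stD-nonD}) yields $a^{N+s\alpha_0}b^{N+s\beta_0}\in L'$ and $a^{N+s\alpha_1}b^{2N+s\beta_1}\in L'$ for all $s\ge 0$; forcing all of these into $L'$ gives $\beta_0=\alpha_0\ge 1$ and $\beta_1=2\alpha_1$ with $\alpha_1\ge 1$. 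The final move is to splice: run the shortest accepting computation of $a^Nb^N$ but replace its one traversal of the cycle by a traversal using the transitions the $a^Nb^{2N}$-computation uses on that same state cycle (so reading $(\alpha_1,2\alpha_1)$ in place of $(\alpha_0,\alpha_0)$), and then append the untouched tail of the $a^Nb^N$-computation. Since at most $kr$ symbols have been consumed when the cycle is reached and $N$ is huge, all these transitions remain applicable, and a direct configuration check shows the result is an accepting computation on $a^{N'}b^{m'}$ with $N'=N+\alpha_1-\alpha_0$ and $m'=N+2\alpha_1-\alpha_0$. Here $m'>N'$, while $m'=2N'$ would force $\alpha_0=N$, impossible since $\alpha_0\le kr<N$; hence $a^{N'}b^{m'}$ is accepted although $m'\notin\{N',2N'\}$, contradicting $L(A)=L'$.

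I expect the pure‑region determinisation of the state component (the forced map $\tau$ and the fixed state cycle through which every long computation must pass) to be the heart of the argument; the rest is careful bookkeeping with configurations, the only real care being to choose $N$ large enough that every transition reused in the splicing step is still enabled, and to pass to shortest accepting computations so that the cycle is guaranteed to consume at least one symbol.
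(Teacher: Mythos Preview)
Your proof is correct and follows essentially the same route as the paper: the same witness language $\{a^nb^n\}\cup\{a^nb^{2n}\}$, the same key observation that quasi-determinism forces a deterministic (ultimately periodic) state sequence while both heads remain in the pure $a$-/$b$-blocks, and the same pump-and-splice contradiction. The only cosmetic differences are that the paper splices the cycle extracted from the $a^Nb^N$-computation into the $a^Nb^{2N}$-computation (rather than the other way around, which makes the final arithmetic slightly cleaner), and that you secure ``the cycle consumes input'' via shortest accepting computations where the paper appeals to a w.l.o.g.\ normalization of the automaton.
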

\begin{proof}
  The inclusion is trivial from the definition; we need to show only properness.
  Let us consider the language $L=\{a^nb^n ~|~ n\geq 0\} \cup \{a^nb^{2n} ~|~ n\geq 0\}$. Clearly, it is a union of two linear context-free languages, and it is also linear.
  The rest of the proof goes by contradiction. Thus, let us assume that there is a
quasi-deterministic
sensing $5'\rightarrow 3'$ WK automaton $A$ that accepts $L$. (W.l.o.g., we assume that there are no states in $A$ such that the only transition is the one that reads no letters from the input.)
Let the number of states of $A$ be $k$ and let the longest string read by a possible transition have length $\ell$. Now let us consider the words $u=a^{(3k+2)\ell}b^{(3k+2)\ell}$ and $v=a^{(3k+2)\ell}b^{(6k+4)\ell}$, clearly both of them are in $L$, thus $A$ must accept both of them.
On the other hand, $A$ may read only the prefixes and suffixes of those words, and thus cannot distinguish them. We shall show that by accepting both of these words by $A$ we can also accept some words that are not in $L$. Since the words are long enough, there must be a state during the computation that is repeated, more precisely, we have the following. As one step of the computation may process at most $2\ell$ letters (in fact, at most $\ell$ by the first head and at most $\ell$ by the second head), the accepting computation on $u$ (and also on $v$) must take more than $3k$ steps. By the pigeon-hole principle, there must be a state that appears at least $3$ times during such accepting computation (already in the first $2k+1$ configurations, i.e., during or right after the first $2k$ steps).
Now, in the first $3k$ steps, both in $u$ and $v$, the first head can only read $a$-s, i.e., in each step, a word from $a^*$ is read with length at most $\ell$. Similarly, the second head can read only word of $b^*$ with length at most $\ell$. Since $A$ is quasi-periodic, in each of these first $3k$ steps, the computation goes through the same sequence of states.
 In this way, if $q$ is a state that is used in at least three different configurations in the first $3k$ steps (we already have shown that such a state exists, maybe $q=q_0$), then we
 have an accepting computation on $u$ as $(q_0,u) \Rightarrow^* (q,a^{i_1}b^{j_1})
\Rightarrow^* (q,a^{i_2}b^{j_2}) %
\Rightarrow^*
 (\lambda,q_f)$ with some accepting state $q_f\in F$. By the cycle of the computation made from state $q$ to reach again state $q$, it is clear that the word $u' = a^{(3k+2)\ell+n(i_2-i_1)}b^{(3k+2)\ell+n(j_2-j_1)}$  is also accepted for all positive integer $n$, which implies that $i_2 - i_1 = j_2 - j_1$. Let this number be denoted by $x$, i.e., $x=i_2-i_1$. In such a cycle, $A$ reads $x$ $a$-s and $x$ $b$-s, and $x>0$.
Now, considering the accepting computation on $v$ which also goes through on $q$ more than once during the first $3k$ steps, i.e., it can be written as $(q_0,v) \Rightarrow^* (q,a^{i_3}b^{j_3})
\Rightarrow^* (q,a^{i_4}b^{j_4}) \Rightarrow^* %
 (\lambda,q'_f)$ with an accepting state $q'_f$.
Here, there is also a cycle in the computation determined by the above two configurations containing state $q$. However, if one insert the previously studied cycle of the accepting computation on $u$, here, then we get a computation that is also accepting on the word $w=a^{(3k+2)\ell+x}b^{(6k+4)\ell+x}$
  $(q_0,w) \Rightarrow^* (q,a^{i_3+x}b^{j_3+x}) \Rightarrow^* (q,a^{i_4+x}b^{j_4+x})
\Rightarrow^* (q,a^{i_4}b^{j_4}) \Rightarrow^* %
 (\lambda,q'_f)$. However, the numbers of $a$-s and $b$-s in $w$ are neither equal (as $\ell>0$),
 nor the number of $b$-s is double than the number of $a$-s (as $x>0$).
 This contradicts to the fact that $A$ accepts $L$.
\end{proof}

From the proof of the previous result, we may also infer the following.

\begin{corollary}
  The class of languages accepted by quasi-deterministic sensing $5'\rightarrow 3'$ WK automata
  is not closed under union.
\end{corollary}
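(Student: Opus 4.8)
The plan is to package the argument already carried out in the proof of the preceding theorem into a non-closure statement by exhibiting its witness language as a union of two members of the class. First I would recall that in that proof the language
\[
L = \{a^n b^n \mid n \geq 0\} \cup \{a^n b^{2n} \mid n \geq 0\}
\]
was shown \emph{not} to be accepted by any quasi-deterministic sensing $5'\rightarrow 3'$ WK automaton. Hence the whole corollary reduces to checking that each of the two sets appearing in this union \emph{is} accepted by such an automaton.

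Next I would observe that $L_1 = \{a^n b^n \mid n \geq 0\}$ is accepted by the one-state sensing $5'\rightarrow 3'$ WK automaton whose only transition is the loop $(a,b)$, and that $L_2 = \{a^n b^{2n} \mid n \geq 0\}$ is accepted by the one-state automaton whose only transition is the loop $(a,bb)$ — exactly the kind of stateless automaton already used in Example~\ref{ex-stD-nonD}. Since a stateless automaton has a single state, that state is trivially the unique state that can occur in any successor configuration, so both automata are quasi-deterministic. (One should also record the convention adopted throughout the paper that the empty word is disregarded, so ``accepts $L_i$'' here is meant modulo $\lambda$, which causes no difficulty.)

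Putting the two observations together yields $L_1, L_2$ inside the class of languages accepted by quasi-deterministic sensing $5'\rightarrow 3'$ WK automata while $L_1 \cup L_2 = L$ lies outside it, which is precisely the failure of closure under union asserted by the corollary.

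The main obstacle is essentially absent: all the technical content lives in the theorem whose proof is being reused, and the pumping-style contradiction there is the only delicate part. The only points that still need a word of justification are that the two witness automata really are quasi-deterministic — immediate from statelessness — and that they accept exactly $L_1$ and $L_2$ (again immediate, since a single loop $(a,b^c)$ accepts $\{a^n b^{cn} \mid n \geq 0\}$). So this corollary is genuinely just a repackaging of the theorem, and I would present it in a few lines accordingly.
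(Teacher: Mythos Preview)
Your argument is correct and matches the paper's own treatment: the paper presents this corollary without a separate proof, merely noting that it can be inferred from the preceding theorem, and your proposal spells out exactly that inference by exhibiting $L_1=\{a^nb^n\}$ and $L_2=\{a^nb^{2n}\}$ via stateless (hence quasi-deterministic) automata while $L_1\cup L_2$ is the language shown non-qD in the theorem.
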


Based on the results presented so far,
Figure \ref{hasse} shows how the language class accepted by quasi-deterministic
sensing $5' \rightarrow 3'$ WK automata relates to the language classes accepted by the previously studied related models.
\begin{figure}[t]
    \centering
        \includegraphics[scale=.83]{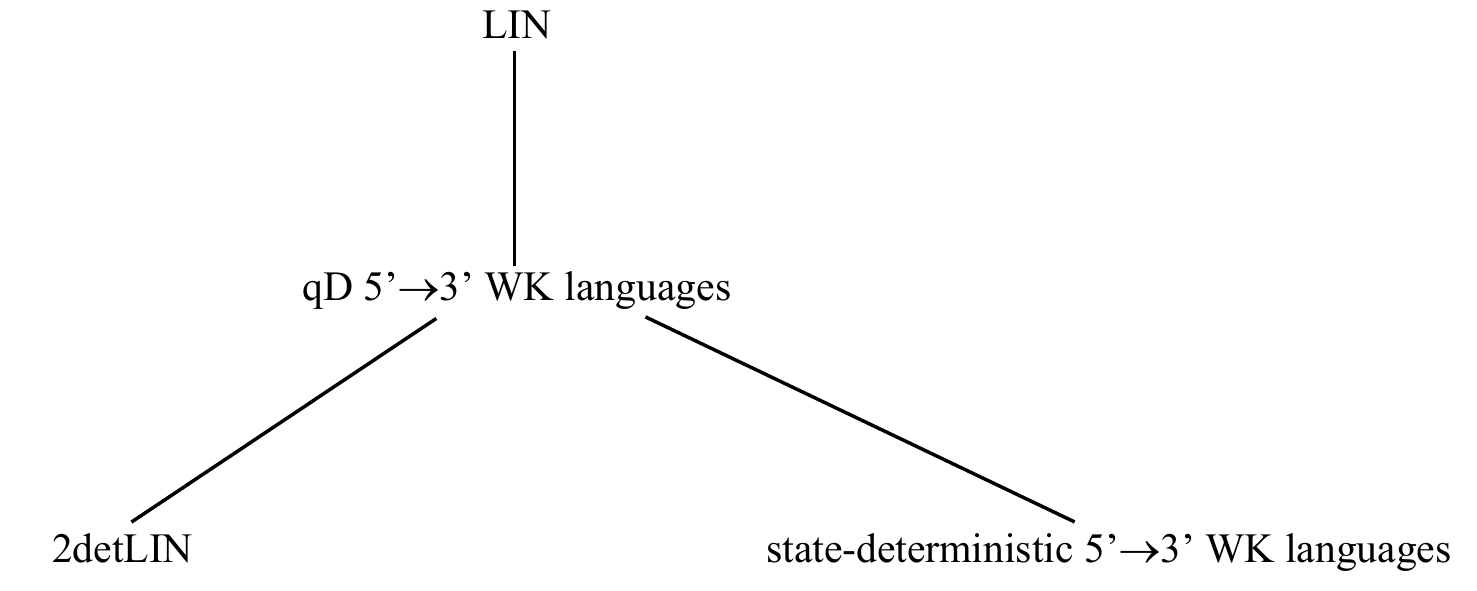} 
            \caption{The place of language class of quasi-deterministic sensing $5' \rightarrow 3'$ WK automata in the hierarchy of related classes of languages.}
    \label{hasse}
\end{figure}

In the rest of the section, we consider some restricted variants of quasi-deterministic
sensing $5'\rightarrow 3'$ WK automata.

\subsection{On stateless variants}

We start this subsection with the following observation.

\begin{proposition}
  Every stateless sensing $5'\rightarrow 3'$ WK automaton is quasi-deterministic.
\end{proposition}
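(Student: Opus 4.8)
The plan is to unfold the definition of quasi-determinism and exploit the fact that a stateless automaton has exactly one state, namely $q_0$, which is therefore the only state that can ever appear in any configuration. First I would recall that in a stateless sensing $5'\rightarrow 3'$ WK automaton we have $Q = F = \{q_0\}$ by definition, so every configuration reachable in any computation is of the form $(q_0, w)$ for some $w \in T^*$; in particular, whenever a transition step $(q_0,w) \Rightarrow (p,u)$ is applicable, we are forced to have $p = q_0$, since $q_0$ is the only available state.

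Next I would check the quasi-determinism condition directly against its definition. Suppose $(q_0,w) \Rightarrow (p,u)$ and $(q_0,w) \Rightarrow (r,v)$ for some configurations. By the previous observation, $p = q_0$ and $r = q_0$, hence $p = r$. This is exactly the defining property of a quasi-deterministic automaton: for each possible configuration, any two successor configurations must agree on their state. So the statement follows immediately.

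I do not expect any real obstacle here; the proof is a one-line consequence of the definitions, and the only thing to be careful about is to phrase it in terms of the formal definition of quasi-determinism (the state of the next configuration, not the next configuration itself) rather than conflating it with ordinary determinism, since a stateless automaton is emphatically not deterministic in general (it may have several loop transitions enabled simultaneously, as in Example~\ref{ex-stD-nonD}). Thus the proof would read, in essence: in a stateless automaton the unique state $q_0$ is the only state that can appear in any configuration, hence trivially the state of every successor configuration is determined (it is always $q_0$), which is precisely quasi-determinism.
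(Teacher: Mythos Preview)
Your proposal is correct and follows essentially the same approach as the paper's own proof: both argue that since a stateless automaton has only its sole state $q_0$ available, every successor configuration necessarily has state $q_0$, which immediately yields quasi-determinism. Your version is simply a more detailed unfolding of the same one-line observation.
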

\begin{proof}
In a computation of a stateless automaton, the sole state appears in every configuration, thus the automaton must be quasi-deterministic.
\end{proof}

As the quasi-determinism is a condition that does not have any influence on the other usual restrictions, we can state the following consequences.

\begin{corollary}
 The class of N sensing $5'\rightarrow 3'$ WK automata is the same as the class of qDN sensing $5'\rightarrow 3'$ WK automata.

 The class of NS sensing $5'\rightarrow 3'$ WK automata is the same as the class of qDNS sensing $5'\rightarrow 3'$ WK automata.

 The class of N1 sensing $5'\rightarrow 3'$ WK automata is the same as the class of qDN1 sensing $5'\rightarrow 3'$ WK automata.
\end{corollary}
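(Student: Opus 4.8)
The plan is to obtain all three equalities directly from the preceding Proposition, which states that every stateless (N) sensing $5'\to 3'$ WK automaton is quasi-deterministic. First I would note that one inclusion is trivial in each case: attaching the qD prefix only restricts the family of automata under consideration, so every qDN automaton is in particular an N automaton, every qDNS automaton is an NS automaton, and every qDN1 automaton is an N1 automaton.

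For the reverse inclusions, I would invoke the Proposition. Since a stateless automaton has $Q=F=\{q_0\}$, its sole state $q_0$ appears in every configuration of every computation; hence whenever $(q_0,w)\Rightarrow(p,u)$ and $(q_0,w)\Rightarrow(r,v)$ we automatically have $p=r=q_0$, so the quasi-determinism condition is satisfied. Thus every N automaton is already a qDN automaton, which shows that the N and qDN classes coincide.

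Finally I would remark that the simple (S) and 1-limited (1) restrictions are conditions on the shape of the transition mapping $\delta$ alone (for S, at most one head moves in a step; for 1, exactly one letter is read in a step), and are completely independent of the stateless condition and of quasi-determinism. Consequently the coincidence of the N and qDN classes survives the addition of either extra restriction to both sides: an NS automaton, being stateless, is quasi-deterministic, hence a qDNS automaton, and conversely; the same reasoning applies to N1 and qDN1. There is essentially no obstacle here --- the only point worth checking is that the three restrictions commute, which is immediate from their definitions --- so the corollary follows.
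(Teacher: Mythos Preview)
Your proposal is correct and follows exactly the paper's approach: the corollary is stated as a direct consequence of the preceding Proposition that every stateless sensing $5'\to 3'$ WK automaton is quasi-deterministic, together with the observation that the quasi-determinism condition is independent of the S and 1 restrictions. Your write-up simply makes explicit the two trivial inclusions and the independence of the restrictions, which the paper leaves implicit.
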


Based on the results proven in \cite{NaCo-Shag,AFL}, we can establish the following hierarchy results.

\begin{corollary}
 The class of languages accepted by qDN sensing $5'\rightarrow 3'$ WK automata is a proper superset of the class of languages
accepted by qDNS sensing $5'\rightarrow 3'$ WK automata.

Further,  the class of languages accepted by qDNS sensing $5'\rightarrow 3'$ WK automata is a proper superset of the
  class of languages accepted by qDN1 sensing $5'\rightarrow 3'$ WK automata.
\end{corollary}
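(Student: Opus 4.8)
The plan is to reduce the statement to already-known separations between the unrestricted stateless families, using the preceding corollary. By that corollary the classes of qDN, qDNS and qDN1 sensing $5'\rightarrow 3'$ WK automata coincide, respectively, with the classes of N, NS and N1 sensing $5'\rightarrow 3'$ WK automata, since quasi-determinism holds automatically in the stateless case and imposes no further restriction. Hence it suffices to prove that, at the level of accepted languages, $\mathrm{N}\supsetneq\mathrm{NS}\supsetneq\mathrm{N1}$ for stateless sensing $5'\rightarrow 3'$ WK automata, and then transport these strict inclusions back along the identifications $\mathrm{qDN}=\mathrm{N}$, $\mathrm{qDNS}=\mathrm{NS}$, $\mathrm{qDN1}=\mathrm{N1}$.

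The inclusions themselves are immediate from the definitions: reading exactly one letter in a step is a special case of at most one head moving, so every N1 automaton is an NS automaton; and every NS automaton is an N automaton. Only properness remains, and here I would invoke the hierarchy results of \cite{NaCo-Shag,AFL}. For the first separation one takes (or cites) a language $L_1$ accepted by a stateless sensing $5'\rightarrow 3'$ WK automaton that uses a transition moving both heads simultaneously in an essential way — for instance a language whose defining condition couples a prefix block and a suffix block at a fixed ratio that a ``one head at a time'' device cannot maintain — and argues, by a cycle/pumping argument in the style of Examples~\ref{ex-stD-nonD} and \ref{ex-bab+b}, that no NS automaton accepts $L_1$. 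For the second separation one takes a language $L_2$ accepted by a stateless simple automaton with a single transition reading a word of length $\geq 2$ on one head, and shows by a counting argument that a stateless $1$-limited automaton, forced to read exactly one input letter per step, cannot reproduce it.

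The main obstacle is essentially bookkeeping: one must make sure the witnessing languages drawn from \cite{NaCo-Shag,AFL} are genuinely accepted by \emph{stateless} automata of the weaker type on the larger side (not merely by automata with few states) and are provably outside the smaller class, and that the two separations chain together exactly as stated. Once the proper inclusions between the N, NS and N1 language classes are in hand, the passage back through the stateless identifications of the previous corollary yields the claimed chain $\mathrm{qDN}\supsetneq\mathrm{qDNS}\supsetneq\mathrm{qDN1}$ with no extra argument.
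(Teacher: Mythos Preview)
Your proposal is correct and follows essentially the same approach as the paper: use the preceding corollary to identify the qDN, qDNS, qDN1 classes with the N, NS, N1 classes respectively, and then invoke the proper inclusions $\mathrm{N}\supsetneq\mathrm{NS}\supsetneq\mathrm{N1}$ established in \cite{NaCo-Shag,AFL}. The paper in fact gives no explicit proof at all for this corollary beyond the sentence ``Based on the results proven in \cite{NaCo-Shag,AFL}'', so your sketch of how those separations are obtained is already more detailed than what appears there.
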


Further, by observing that the language of Example \ref{ex-bab+b}
cannot be accepted by a quasi-deterministic stateless sensing $5'\rightarrow 3'$ WK automata),
we state that to be stateless is stronger restriction than to be quasi-deterministic.

\begin{proposition}\label{Nsd-1}
The class of languages accepted by qDN sensing $5'\rightarrow 3'$ WK automata
is properly included in the set of languages accepted by quasi-deterministic sensing $5'\rightarrow 3'$ WK automata.
\end{proposition}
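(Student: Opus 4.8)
The plan is to prove the two directions separately. The inclusion is immediate: a qDN sensing $5'\rightarrow 3'$ WK automaton is by definition stateless, and by the proposition above every stateless sensing $5'\rightarrow 3'$ WK automaton is quasi-deterministic; hence it is in particular a quasi-deterministic sensing $5'\rightarrow 3'$ WK automaton, so every language accepted by a qDN automaton is accepted by some qD automaton. What remains is to exhibit a witness for properness, and the natural choice is the language $L=b^*ab^*+b^*$ of Example~\ref{ex-bab+b}.

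For that witness, recall first that $L$ is regular, hence accepted by a DFA, and a DFA can be regarded as a quasi-deterministic sensing $5'\rightarrow 3'$ WK automaton (one that uses only its left head, as observed before Section~\ref{s:WK}); so $L$ is accepted by a qD sensing $5'\rightarrow 3'$ WK automaton. It then suffices to show that $L$ is accepted by no stateless sensing $5'\rightarrow 3'$ WK automaton (equivalently, since stateless implies quasi-deterministic, by no qDN automaton). I would argue by contradiction: assume a stateless automaton $A$ with sole, and final, state $q_0$ accepts $L$. Since $a\in L$, there is an accepting computation $(q_0,a)\Rightarrow^*(q_0,\lambda)$. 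Because $a$ consists of a single letter and a transition consumes whole strings, exactly one step of this computation reads a string in which that $a$ occurs; before that step the unread word is still exactly $a$ (nothing else could have been consumed) and after it the unread word is empty, so that step is $(q_0,a)\Rightarrow(q_0,\lambda)$ and the transition it uses is either $(a,\lambda)$ or $(\lambda,a)$. Call it $t$. Since $A$ is stateless, $t$ is a loop at $q_0$, hence it may be applied twice in a row: $(q_0,aa)\Rightarrow(q_0,a)\Rightarrow(q_0,\lambda)$, which gives $aa\in L(A)$. But $aa\notin L$, a contradiction; hence $L$ is accepted by no stateless (in particular no qDN) sensing $5'\rightarrow 3'$ WK automaton, and the inclusion is proper.

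The one step requiring care — the part I would write out in full — is the identification of the transition $t$: one must argue that the single $a$-reading step of the computation on the word $a$ can only read the whole letter $a$ with one head while the other head reads $\lambda$, using that a letter cannot be split across transitions, that there is just one $a$ in the word, and that the configuration before this step is necessarily $(q_0,a)$. Once $t$ is pinned down, the stateless-ness makes it a self-loop that can simply be fired twice, and the observation $aa\notin L$ closes the argument; everything beyond the identification of $t$ is routine.
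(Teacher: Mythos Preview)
Your proof is correct and uses the same witness language $b^*ab^*+b^*$ as the paper, but your argument for why no stateless automaton accepts it differs from the paper's. The paper does not give a fresh argument here: it simply refers back to Example~\ref{ex-bab+b}, where it was shown that this language is accepted by no \emph{state}-deterministic sensing $5'\to 3'$ WK automaton (via an argument about the unique cycle in the state graph and the finitely many states preceding it); since every stateless automaton is trivially state-deterministic, the stateless case follows. You instead argue directly at the stateless level: from $a\in L$ you extract a loop transition reading a single $a$, then pump it to get $aa\in L(A)$, contradicting $aa\notin L$. Your route is shorter and entirely self-contained, avoiding the structural cycle analysis of Example~\ref{ex-bab+b}; the paper's route, on the other hand, gets the stateless case for free once the stronger state-deterministic result is in hand. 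Either way the identification of the transition $t$ is indeed the only point needing care, and your handling of it (including the possibility of $(\lambda,\lambda)$-steps before and after) is sound.
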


\subsection{Relation to regular languages}

In this section, we analyse the relation of the class REG of regular languages to variants of
quasi-deterministic sensing $5'\rightarrow 3'$ WK automata.

On the one hand, we have:
\begin{proposition}
The language class accepted by qDNS sensing $5'\rightarrow 3'$ WK automata is a proper subset of the class REG of regular languages.
\end{proposition}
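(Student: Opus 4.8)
The plan is to reduce the claim to an explicit description of the languages accepted by stateless simple automata, and then to name a regular language escaping that description. First, by the corollary stating that the classes of NS and qDNS sensing $5'\rightarrow 3'$ WK automata coincide (recall that a stateless automaton is automatically quasi-deterministic), it suffices to treat NS automata. So let $A$ be a stateless simple sensing $5'\rightarrow 3'$ WK automaton: then $Q=F=\{q_0\}$, every transition is a loop at $q_0$, and, by simplicity, every transition moves exactly one head. Hence $\delta$ is completely encoded by the two finite sets $U=\{u\in T^*\mid q_0\in\delta(q_0,u,\lambda)\}$ and $V=\{v\in T^*\mid q_0\in\delta(q_0,\lambda,v)\}$.

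Next I would show that $L(A)=U^{*}V^{*}$. In any accepting computation on a word $w$, the strings successively read by the left head concatenate, in reading order, to a prefix of $w$ lying in $U^{*}$; the strings successively read by the right head concatenate, in the \emph{reverse} of their reading order (the right head works inwards from the right end), to a suffix of $w$ lying in $V^{*}$; and, since a computation is accepting only when the whole input has been consumed, this prefix and this suffix partition $w$. Hence $w\in U^{*}V^{*}$. Conversely, any factorisation $w=ps$ with $p\in U^{*}$, $s\in V^{*}$ is realised by an accepting computation (let the left head read $p$ chunk by chunk and the right head read $s$ chunk by chunk from the right). As $U$ and $V$ are finite, $U^{*}$ and $V^{*}$ are regular and so is their product $U^{*}V^{*}$; this establishes that qDNS automata accept only regular languages.

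For strictness I would produce a regular language not of the form $U^{*}V^{*}$ with $U,V$ finite. The most economical witness is the language $ b^*ab^* + b^* $ of Example~\ref{ex-bab+b}, which has already been observed not to be accepted by any quasi-deterministic stateless sensing $5'\rightarrow 3'$ WK automaton and hence, a fortiori, by no qDNS automaton, while being regular. (A self-contained alternative: any nonempty finite language containing a nonempty word, e.g.\ $\{a\}$, or the language of odd powers of $a$ --- if $a\in U^{*}V^{*}$, then $a$, being a single letter, lies in $U$ or in $V$, whence $aa\in U^{*}V^{*}$, so the accepted language contains a word of length $2$ and cannot be $\{a\}$ nor the odd powers even modulo the empty word.) The only point requiring care is the identity $L(A)=U^{*}V^{*}$ --- in particular the fact that the suffix read by the right head appears as the reverse-order concatenation $v_m v_{m-1}\cdots v_1$ of its individual reads $v_i\in V$, which is still a member of $V^{*}$; beyond this bookkeeping there is no real obstacle, and quasi-determinism itself contributes nothing here, since statelessness already forces it.
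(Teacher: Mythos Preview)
Your proof is correct and follows essentially the same route as the paper: both characterise the language of an NS automaton as $U^{*}V^{*}$ (the paper writes it as $(v_1+\dots+v_i)^*(u_1+\dots+u_j)^*$) and both invoke the language $b^*ab^*+b^*$ of Example~\ref{ex-bab+b} for strictness. Your version is in fact more detailed---you spell out the reverse-order bookkeeping for the right head and supply a self-contained alternative witness---whereas the paper states the characterisation without justification.
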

\begin{proof}
The qDNS sensing $5'\rightarrow 3'$ WK automata can accept only special regular languages
of the form $(v_1+\dots+v_i)^*(u_1+\dots+u_j)^*$, where the transitions in which the first head can read are $\{q\}=\delta(q,v_1,\lambda)=\dots=\delta(q,v_i,\lambda)$ and the transitions with the second head are $\{q\}=
\delta(q,\lambda,u_1)=\dots=\delta(q,\lambda,u_j)$.
Furthermore, we have also shown a regular language in Example \ref{ex-bab+b} that cannot be accepted by any
qD stateless sensing $5'\rightarrow 3'$ WK automata.
\end{proof}

On the other hand, there are non-regular languages that are accepted by qDN sensing $5'\rightarrow 3'$ WK automata:
\begin{example}\label{ex-ab}
  Let us consider the qDN sensing $5'\rightarrow 3'$ WK automaton with only one transition,
  $\{q\} =\delta(q,a,b)$. This automaton accept the non-regular language $\{a^nb^n~|~n\geq 0\}$.
\end{example}

\begin{corollary}
  The class of languages accepted by qDN sensing $5'\rightarrow 3'$ WK automata and the class REG of regular languages are incomparable under set-theoretic inclusion relation.
\end{corollary}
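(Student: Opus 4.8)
The plan is to assemble the two required non\-inclusions from witnesses that are already on the table, since the corollary is exactly the conjunction of ``qDN is not contained in REG'' and ``REG is not contained in qDN.''

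For the non\-inclusion $\mathrm{REG}\not\subseteq$ (languages of qDN sensing $5'\to 3'$ WK automata): by the earlier corollary the class of qDN sensing $5'\to 3'$ WK automata coincides with the class N of stateless ones, and a stateless automaton is in particular state\-deterministic, since its unique state trivially satisfies the state\-determinism condition. Hence Example~\ref{ex-bab+b}, which exhibits the regular language $b^*ab^*+b^*$ together with a proof that no state\-deterministic sensing $5'\to 3'$ WK automaton accepts it, already gives $b^*ab^*+b^*\in\mathrm{REG}$ while $b^*ab^*+b^*$ is accepted by no qDN sensing $5'\to 3'$ WK automaton. (Alternatively one can argue directly: in a stateless automaton every edge is a loop, so a transition reading a string containing $a$ could be applied more than once, forcing an accepted word with at least two occurrences of $a$, which is impossible for $b^*ab^*+b^*$.)

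For the reverse non\-inclusion, (languages of qDN sensing $5'\to 3'$ WK automata) $\not\subseteq\mathrm{REG}$: Example~\ref{ex-ab} provides a one\-transition qDN sensing $5'\to 3'$ WK automaton accepting $\{a^nb^n\mid n\geq 0\}$, and this language is not regular by the pumping lemma for regular languages. Combining the two non\-inclusions yields that the two classes are incomparable under $\subseteq$, which is the assertion. I expect no genuine obstacle; the only fine point worth flagging is the standing convention of working modulo the empty word, but this affects neither witness, since $\{a^nb^n\mid n\geq 1\}$ remains non\-regular and the membership of $b^*ab^*+b^*$ in $\mathrm{REG}$ is untouched.
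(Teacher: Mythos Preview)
Your proposal is correct and follows essentially the same approach as the paper: the paper derives the corollary directly from Example~\ref{ex-bab+b} (the regular language $b^*ab^*+b^*$ is not accepted by any stateless sensing $5'\to 3'$ WK automaton, via the state-deterministic argument you reconstruct) together with Example~\ref{ex-ab} (the qDN automaton for $\{a^nb^n\mid n\ge 0\}$). Your explicit observation that stateless implies state-deterministic is exactly the bridge the paper leaves implicit when it cites Example~\ref{ex-bab+b} for the stateless case.
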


Further, we have that all regular languages are accepted by 1-limited variant:

\begin{proposition}\label{1>REG}
  The class of languages accepted by qD1 sensing $5'\rightarrow 3'$ WK automata properly includes the class REG of regular languages.
\end{proposition}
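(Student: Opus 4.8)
The plan is to prove the two inclusions separately: first that every regular language is accepted by some qD1 sensing $5'\to3'$ WK automaton, and then that the containment is strict by exhibiting a non-regular language accepted by such an automaton.

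For the first part, I would start from a DFA $M=(T,Q,q_0,F,\delta)$ with $L(M)=L$ (after deleting states that lie on no accepting path, so that the standing assumption on the automaton graph is satisfied). Reading $M$ as a sensing $5'\to3'$ WK automaton in which the right head never moves — i.e., each transition $q'=\delta(q,a)$ is turned into $\{q'\}=\delta_{WK}(q,a,\lambda)$ — yields a WK automaton in which exactly one letter is consumed per step by a single head, so it is $1$-limited; it is deterministic, hence, by the proposition on deterministic sensing $5'\to3'$ WK automata, quasi-deterministic; and, as observed right before Section~\ref{s:WK}, since only the left head ever reads, the WK automaton is equivalent to $M$ (the heads ``meet'' precisely when the whole input has been scanned by the left head, and acceptance coincides with acceptance by $M$, modulo the empty word as usual). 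Thus $\mathrm{REG}$ is contained in the class of languages accepted by qD1 sensing $5'\to3'$ WK automata.

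For strictness, note first that the one-state automaton of Example~\ref{ex-ab} is \emph{not} $1$-limited, since it reads two letters in a single step, so a genuinely $1$-limited construction is needed. Take $A$ with state set $\{q_0,q_1\}$, initial and sole final state $q_0$, and the two transitions $\{q_1\}=\delta(q_0,a,\lambda)$ and $\{q_0\}=\delta(q_1,\lambda,b)$. Each transition reads exactly one letter with one head, so $A$ is $1$-limited; every transition out of a state goes to a single state, so $A$ is even state-deterministic and therefore quasi-deterministic. One checks that for $n\ge 1$ we have $(q_0,a^nb^n)\Rightarrow(q_1,a^{n-1}b^n)\Rightarrow(q_0,a^{n-1}b^{n-1})\Rightarrow^*(q_0,\lambda)$, and for $n=0$ acceptance is immediate, so $a^nb^n$ is accepted for every $n\ge 0$; conversely, in any computation the state alternates $q_0,q_1,q_0,\dots$, each full round from $q_0$ back to $q_0$ consuming one $a$ from the front and one $b$ from the back, so any word whose computation ends in $(q_0,\lambda)$ must have the form $a^kb^k$. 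Hence $L(A)=\{a^nb^n\mid n\ge 0\}$, a non-regular language, and the inclusion $\mathrm{REG}\subseteq{}$ (class of qD1 languages) is proper. There is no serious obstacle here; the only points requiring care are checking that casting a DFA as a WK automaton respects both the sensing/acceptance semantics and the $1$-limited format, and replacing the two-letters-per-step automaton of Example~\ref{ex-ab} by the alternating two-state machine above, which stays $1$-limited while still recognising a non-regular language.
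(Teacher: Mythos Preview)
Your proof is correct and follows essentially the same approach as the paper: the inclusion comes from viewing a DFA as a qD1 sensing $5'\to 3'$ WK automaton that uses only its left head, and properness is shown by exhibiting a concrete non-regular language accepted by a qD1 automaton. The only difference is the witness: the paper uses the language of palindromes over $\{a,b\}$ via a three-state qDF1 automaton, while your two-state machine for $\{a^nb^n\mid n\ge 0\}$ is an equally valid (and slightly simpler) choice.
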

\begin{proof}
First, %
 the inclusion is obviously coming from the fact that all DFA can be seen as a special variant of qD1 sensing $5'\rightarrow 3'$ WK automata, where the second head does not read any input symbol in any step. %

To show that the inclusion is proper, let us consider the following qDF1 sensing $5'\rightarrow 3'$ WK automaton: $(\{q,p,r\},\{a,b\},q,\{q,p,r\},\delta)$ with transition function $\delta$ as follows:

\noindent $\{p\} = \delta(q,a,\lambda)$ \quad $\{q\} = \delta(p,\lambda,a)$ \quad
$\{r\} = \delta(q,b,\lambda)$ \quad $\{q\} = \delta(r,\lambda,b)$.

\noindent This automaton accepts the language of palindromes over $\{a,b\}$, i.e., exactly those words that are the same if one reads them backward from the end. This is a well-known non-regular language.
\end{proof}

Now, we have already seen some languages, e.g., Example \ref{ex-ab} and in the previous proof, that are not regular but accepted by quasi-deterministic
all-final  sensing $5'\rightarrow 3'$ WK automata (maybe even by a more restricted variant).
Now we show another fact.
\begin{lemma}\label{lm-bab}
The regular language $L'$ given by $b^*ab^*$ is not accepted by any qDF sensing $5'\rightarrow 3'$ WK automata.
\end{lemma}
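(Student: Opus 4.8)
The plan is to argue by contradiction, exploiting the two features that distinguish a $\mathrm{qDF}$ automaton: all states are final (so every configuration reachable from the initial one is accepting), and the state of the next configuration is uniquely determined. Suppose a $\mathrm{qDF}$ sensing $5'\to 3'$ WK automaton $A$ accepts $L'=b^*ab^*$. Since $L'$ is infinite, $A$ must have a cycle in its graph. The letter $a$ (or a string containing $a$) is read by exactly one head in at least one transition; call that head $h$ and call that transition $t_a$. The transition $t_a$ cannot lie on a cycle, since iterating such a cycle would produce accepted words with two or more $a$'s. First I would show that $t_a$ must actually be usable: since every word of $L'$ contains an $a$, some accepting computation uses $t_a$, hence the source state $p$ of $t_a$ is reachable from $q_0$ along a finite (cycle-free-prefix) path.

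Next I would use the all-final property together with quasi-determinism to bound how many $b$'s head $h$ can read before firing $t_a$. Because $A$ is quasi-deterministic, from $q_0$ the sequence of states visited in a computation is forced (each step has a unique next state), so the state reached after $n$ steps is the same regardless of which word of $L'$ is being processed — until the computation first needs a transition that is undefined on the current input. In particular, to reach the state $p$ from which $t_a$ fires, $A$ must traverse a fixed walk in its state graph; the portion of this walk before $p$ either is simple or cycles through states, but any cycle traversed here, since all states are final, would have to read only $b$'s by head $h$ on one side (and $\lambda$, or $b$'s, by the other head) — and crucially, because $A$ is quasi-deterministic, once the computation is on such a $b$-reading cycle it is forced to stay on it as long as the input allows, so it can never leave it to fire the non-cyclic transition $t_a$. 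Hence $p$ is reached by a genuinely cycle-free walk, along which head $h$ reads a bounded number, say $j$, of $b$'s. I expect this to be the crux of the argument and the place requiring the most care: ruling out that $A$ loops through a state before $t_a$ and then escapes to $t_a$ — the key is that quasi-determinism, unlike mere state-determinism, does not by itself forbid leaving a cycle, so the all-final assumption has to be invoked to force the ``escape'' transition itself to go to the unique successor state, and one must check the configuration-level determinism carefully rather than the state-level one.

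Once $j$ is pinned down, the contradiction is immediate: the word $b^{j+1}ab^{j+1}\in L'$ requires head $h$ to read at least $j+1$ occurrences of $b$ before the single $a$ (if $h$ is the left head) or after it (if $h$ is the right head), but along the forced walk to $p$ head $h$ reads at most $j$ of them and no later transition reading $a$ is available except $t_a$. So $A$ cannot accept $b^{j+1}ab^{j+1}$, contradicting $L(A)=L'$. This finishes the proof. The overall shape mirrors the argument of Example~\ref{ex-bab+b}, but here the extra leverage is the all-final condition, which is exactly what lets us avoid assuming state-determinism and still conclude that the prefix computation reaching $t_a$ is cycle-free.
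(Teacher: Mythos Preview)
Your approach diverges substantially from the paper's and contains a real gap at precisely the point you flag as ``the crux.'' The paper's argument is a one-step proof that uses only the all-final property, not quasi-determinism: let $k$ bound the length of any string readable in a single transition and look at the \emph{first} step of an accepting computation on $w=b^{2k+1}ab^{2k+1}$. That step reads $k_1\le k$ letters from the left and $k_2\le k$ from the right, so neither head can reach the $a$; since every state is final, the very same transition already witnesses acceptance of $b^{k_1}b^{k_2}\notin L'$. That is the entire proof.

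Your proposal instead tries to transplant the state-deterministic argument of Example~\ref{ex-bab+b}, and the transplant fails. You write that ``from $q_0$ the sequence of states visited in a computation is forced \ldots\ the state reached after $n$ steps is the same regardless of which word of $L'$ is being processed.'' This is false for quasi-determinism: the next state is determined by the current \emph{configuration}, not by the current state alone, so different inputs can send a qD automaton through different state sequences from step one (e.g.\ with $\{p\}=\delta(q_0,a,\lambda)$ and $\{r\}=\delta(q_0,b,\lambda)$, $p\ne r$, the computations on $ab$ and $ba$ separate immediately, yet the automaton is qD because in each configuration only one of the two transitions is applicable). Hence there is no ``forced walk to $p$,'' the bound $j$ need not exist, and the claim that the automaton cannot leave a $b$-reading cycle to fire $t_a$ is unsupported. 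You correctly note that quasi-determinism does not forbid leaving a cycle, but the sketch of how the all-final hypothesis would repair this is not an argument. The all-final condition is indeed the right lever; use it directly on a single computation step as the paper does, rather than as a patch on the state-deterministic template.
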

\begin{proof}
The proof goes by contradiction. Let us assume that $A$ is a qDF sensing $5'\rightarrow 3'$ WK automaton that accepts $L'$. Let $k$ is the length of the longest string that can be read by a transition of $A$. Consider the word $w=b^{2k+1}ab^{2k+1}$ which is in $L'$. Let us consider its accepting computation.
The all-final $A$ reads $k_1$ (at most $k$) $b$-s from the prefix and $k_2$ (at most $k$) $b$-s from the suffix of $w$ in the first step of the computation. However, as all states are final, this leads that the input word $b^{k_1} b^{k_2}$ would also be accepted, however it is not in $L'$.
  By this contradiction the lemma is proven.
  \end{proof}

Based on examples mentioned and the previous lemma we have obtained the following result:

\begin{corollary}
  The class of languages accepted by qDF sensing $5'\rightarrow 3'$ WK automata and the class REG of regular languages are incomparable under set-theoretic inclusion relation.

  The class of languages accepted by qDFS sensing $5'\rightarrow 3'$ WK automata and the class REG of regular languages are incomparable under set-theoretic inclusion relation.

  The class of languages accepted by qDF1 sensing $5'\rightarrow 3'$ WK automata and the class REG of regular languages are incomparable under set-theoretic inclusion relation.
\end{corollary}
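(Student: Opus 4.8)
The plan is to prove all three incomparability claims at once by supplying, for each of the classes qDF, qDFS and qDF1 of sensing $5'\rightarrow 3'$ WK automata, one witness for each direction of non-inclusion, reusing the languages already introduced above rather than building anything new.

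For the direction ``REG is not contained in the class accepted by these automata'', I would invoke Lemma~\ref{lm-bab}: the regular language $b^*ab^*$ is accepted by no qDF sensing $5'\rightarrow 3'$ WK automaton. The key observation is then purely structural: a $1$-limited transition function is a special case of a simple one, which is a special case of the general transition form, while the quasi-determinism and all-final conditions are inherited unchanged; hence every qDF1 automaton and every qDFS automaton is also a qDF automaton, so the language class of qDFS (respectively qDF1) is included in that of qDF. Consequently $b^*ab^*$ belongs to none of the three classes, which settles one half of all three statements simultaneously.

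For the converse direction ``the class accepted by these automata is not contained in REG'', I would point to the qDF1 sensing $5'\rightarrow 3'$ WK automaton constructed in the proof of Proposition~\ref{1>REG}, which accepts the language of palindromes over $\{a,b\}$, a well-known non-regular language. Since each of its transitions reads exactly one letter with exactly one head, that automaton is $1$-limited, hence also simple, hence it is at once a qDF1, a qDFS and a qDF sensing $5'\rightarrow 3'$ WK automaton; therefore the palindrome language lies in all three language classes while not being regular. (For the qDF case alone one could equally well cite the non-regular language $\{a^nb^n\mid n\geq 0\}$ of Example~\ref{ex-ab}, accepted even by a qDN automaton, but the palindrome language has the advantage of serving the two restricted classes as well.) Combining the two directions yields the three incomparability assertions.

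I do not expect a genuine obstacle here; the proof is essentially bookkeeping on top of Lemma~\ref{lm-bab} and Proposition~\ref{1>REG}. The only point that needs a moment of care is to verify that a single witness automaton for palindromes already sits inside the \emph{smallest} of the three classes (qDF1), and, symmetrically, that the negative result of Lemma~\ref{lm-bab}, stated for qDF, propagates downward to qDFS and qDF1 via the inclusion of the corresponding automaton classes.
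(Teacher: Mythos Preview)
Your proposal is correct and matches the paper's own argument essentially verbatim: the paper also derives the corollary from Lemma~\ref{lm-bab} (for the regular witness $b^*ab^*$ outside qDF, hence outside qDFS and qDF1) together with the palindrome automaton of Proposition~\ref{1>REG} and Example~\ref{ex-ab} (for non-regular languages inside qDF1 and qDF). Your explicit remark that the inclusions among the automaton classes propagate the negative result downward and the positive result upward is exactly the bookkeeping the paper leaves implicit.
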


\subsection{Further hierarchy results}

In this section, we show further hierarchy results among the analysed classes of languages.

\begin{proposition}\label{pr:N1-F1} %
The class of languages accepted by qDF1 sensing $5'\rightarrow 3'$ WK automata properly includes
the class of languages accepted by qDN1 sensing $5'\rightarrow 3'$ WK automata.
\end{proposition}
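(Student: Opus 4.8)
The plan is to prove the claimed proper inclusion in the usual two steps: first the inclusion of the class of languages accepted by qDN1 sensing $5'\rightarrow 3'$ WK automata in the class accepted by qDF1 sensing $5'\rightarrow 3'$ WK automata, and then a language witnessing that this inclusion is strict.

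For the inclusion I would argue directly at the level of automata. If $A$ is a qDN1 sensing $5'\rightarrow 3'$ WK automaton, then $Q=F=\{q_0\}$, hence in particular $Q=F$, so $A$ is all-final; it is $1$-limited by hypothesis and quasi-deterministic by hypothesis (the latter is anyway automatic, since every stateless automaton is quasi-deterministic by an earlier proposition). Thus $A$ is itself a qDF1 sensing $5'\rightarrow 3'$ WK automaton, and therefore $L(A)$ belongs to the larger class.

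For the strictness I would reuse the language of all palindromes over $\{a,b\}$. The automaton built in the proof of Proposition~\ref{1>REG} is a qDF1 sensing $5'\rightarrow 3'$ WK automaton accepting exactly this language, so it lies in the class accepted by qDF1 automata. It remains to show that no qDN1 automaton accepts it, and here it suffices, since the qDN1 automata coincide with the N1 automata, to take an arbitrary stateless $1$-limited automaton $A$ with unique (and final) state $q_0$. Its only transitions are loops $q_0\in\delta(q_0,c,\lambda)$ (the first head consumes the front letter $c$) or $q_0\in\delta(q_0,\lambda,c)$ (the second head consumes the back letter $c$), so a word is accepted exactly when it can be split as $xy$ with every letter of $x$ readable by the first head and every letter of $y$ readable by the second head; in short $L(A)=S_1^{*}S_2^{*}$ for the two letter-sets $S_1,S_2\subseteq\{a,b\}$ of the heads. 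If $A$ accepted every palindrome, then, to accept the one-letter words $a$ and $b$, both letters would be readable by at least one head. A case distinction on whether $a$ and $b$ are read from the front or from the back then yields, in each of the four cases, an accepting computation on a non-palindrome: e.g.\ if $a$ is read from the front and $b$ from the back then $(q_0,ab)\Rightarrow(q_0,b)\Rightarrow(q_0,\lambda)$, and if $a$ and $b$ are both read from the front then $(q_0,ba)\Rightarrow(q_0,a)\Rightarrow(q_0,\lambda)$; the remaining two cases are symmetric. As $ab$ and $ba$ are not palindromes, this is the desired contradiction, so the inclusion is proper.

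The only genuine work is the elementary structural description of stateless $1$-limited automata (that their languages are exactly the sets $S_1^{*}S_2^{*}$) and the verification that all four front/back combinations force acceptance of $ab$ or $ba$; I do not expect any deeper obstacle, since the conceptual point — that one head cannot ``go back'' across the $a$ to inspect the other half of a palindrome — is precisely what a stateless machine cannot do, whereas the three states of the qDF1 witness supply exactly this missing memory.
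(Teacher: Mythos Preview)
Your proof is correct, but it uses a different separating witness from the paper. The paper takes the regular language $ba^{*}+a^{*}$, gives a two-state qDF1 automaton for it, and then argues that \emph{no stateless} sensing $5'\to 3'$ WK automaton (not just N1) can accept it, because the necessary $b$-reading self-loop could be iterated to accept words with more than one $b$. You instead recycle the palindrome automaton from Proposition~\ref{1>REG} and rule out N1 acceptance via the explicit structural description $L(A)=S_1^{*}S_2^{*}$ together with a short case analysis on which head reads $a$ and which reads $b$.

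Both arguments ultimately exploit that every transition in a stateless automaton is a self-loop and hence freely iterable. Your route has the pleasant side effect of spelling out the exact shape of N1 languages, and it reuses an automaton already constructed earlier instead of introducing a new one. The paper's choice, on the other hand, buys a bit more: its witness $ba^{*}+a^{*}$ is rejected by \emph{every} stateless variant (N, NS, N1), and the paper reuses exactly this fact in the subsequent proposition to separate qDFS from qDNS and qDF from qDN. Your palindrome argument, as written, is tailored to the 1-limited case through the characterisation $S_1^{*}S_2^{*}$, so it would need an extra line (e.g.\ a pumping argument on the $b$-containing loop) to serve those later separations as well.
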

\begin{proof}
  The inclusion comes directly from the definition. We need to prove the properness.
  Let us consider the regular language
 $ba^*+a^*$. A qDF1 sensing $5'\rightarrow 3'$ WK automaton that accepts it can be given as
 $(\{q,p\},\{a,b\},q,\{q,p\},\delta)$, with
 $\{q\}=\delta(q,\lambda,a)$ and $\{q\}=\delta(q,\lambda,b)$.

 Now, we argue that no qD stateless sensing $5'\rightarrow 3'$ WK automata can accept this language. In fact, to accept the language by a stateless variant, there must be a transition from its sole state to itself that allows to read a letter $b$ (or a string that contains it).
 However, then by the iterative use of this transition, words containing more than one $b$ would also be accepted.
\end{proof}

\begin{proposition}
The class of languages accepted by qDFS sensing $5'\rightarrow 3'$ WK automata properly includes
the class of languages accepted by qDF1 sensing $5'\rightarrow 3'$ WK automata.
\end{proposition}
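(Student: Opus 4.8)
The plan is to establish the inclusion by definition and then exhibit a witness language separating the two classes.

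First, the inclusion $\textsf{qDF1} \subseteq \textsf{qDFS}$ is immediate from the definitions: every $1$-limited WK automaton is simple (reading a single letter by one head is a special case of reading a string of $\{T^*,\lambda\}$ or $\{\lambda,T^*\}$ by one head), and the all-final and quasi-deterministic properties are untouched by this observation. So the whole work lies in producing a language in $\textsf{qDFS}$ that is not in $\textsf{qDF1}$.

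For the witness, I would take a language that genuinely requires reading a block of more than one letter in a single step, something like $L'' = (aa)^* + b(aa)^*$ or, more in the spirit of the earlier examples, $L'' = \{w \in \{a\}^* \mid |w| \text{ even}\} \cup \{b w \mid w \in \{a\}^*, |w| \text{ even}\}$. A qDFS automaton accepting it can be built with a single ``$a$-loop'' transition $\{q\}=\delta(q,\lambda,aa)$ reading two $a$'s at once with the second head, plus a transition $\{q\}=\delta(q,\lambda,b)$; all states final, all transitions returning to $q$, so it is trivially quasi-deterministic (indeed stateless if we collapse states). The point is that such a language, or a suitably chosen relative, cannot be recognised $1$-limited: in a qD$1$ automaton every step reads exactly one letter, and I would argue — using the all-final property in the style of Lemma~\ref{lm-bab} and the parity/pigeonhole argument of Example~\ref{ex-bab+b} — that after reading an odd prefix the automaton is already in a configuration $(p,\lambda)$ with $p$ final, forcing acceptance of a word of odd $a$-length, a contradiction. (If $L''$ itself turns out to be acceptable $1$-limited because the automaton can ``count mod $2$'' in its states, then I would instead use a language where the required block is tied to the all-final restriction, e.g. a variant where the length of the mandatory block exceeds any bound derivable from a fixed state set, as in Lemma~\ref{lm-bab}.)

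The main obstacle is precisely the choice of the separating language: for the all-final restricted model, $1$-limitedness is a real loss of power only when the language forces the reading of a genuine multi-letter block that cannot be simulated letter-by-letter while keeping all states accepting. I expect the cleanest route is to mimic Lemma~\ref{lm-bab}: pick $L'' = \{ (ab)^n \mid n \ge 0 \}$ or $\{ (ab)^n c (ab)^n \mid n\ge 0\}$-style language where a qDFS automaton uses a single loop transition reading the pair $ab$ (or $ba$ by the right head), so every reachable configuration is "clean"; then show that any qDF$1$ automaton, reading one symbol per step with all states final, would accept a proper prefix ending after an $a$ but before the matching $b$, which is not in $L''$. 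Establishing that the qDFS automaton is quasi-deterministic is trivial (stateless, or single forward state), and establishing non-acceptance by qDF$1$ reduces to the same "all states final $\Rightarrow$ every intermediate configuration is accepting, so no string may be read non-atomically" argument already used twice in the paper. I would carry out the steps in this order: (1) note the definitional inclusion; (2) fix $L''$ and give the qDFS automaton, verifying quasi-determinism and $L(A)=L''$; (3) assume a qDF$1$ automaton $B$ accepts $L''$, take a long enough word, and derive acceptance of a forbidden word from the all-final and $1$-limited properties; (4) conclude properness.
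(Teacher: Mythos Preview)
Your overall strategy coincides with the paper's: the inclusion is definitional, and properness rests on the key observation that in any qDF1 automaton an accepting computation on a word of length $k$ passes through $k$ intermediate configurations, each of which---since every state is final---witnesses acceptance of the word formed by the already-consumed prefix and suffix; hence the accepted language must contain a word of \emph{every} length from $0$ up to $k$. The paper states this principle explicitly and takes as witness $\{a^{2n}b^{2n}\mid n\ge 0\}\cup\{a^{2(n+1)}b^{2n}\mid n\ge 0\}$, accepted by the two-state qDFS automaton with $\{p\}=\delta(q,aa,\lambda)$ and $\{q\}=\delta(p,\lambda,bb)$; this language has a word of length $2$ but none of length $1$, so it is outside qDF1.

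Your first concrete candidate, however, does not work on either count: the stateless automaton with $\delta(q,\lambda,aa)=\{q\}$ and $\delta(q,\lambda,b)=\{q\}$ accepts $(aa+b)^*$, not $(aa)^*+b(aa)^*$; and $(aa)^*+b(aa)^*$ contains words of every positive length ($b,\,aa,\,baa,\,aaaa,\ldots$), so the length argument cannot exclude it from qDF1. Your later candidate $\{(ab)^n\mid n\ge 0\}$ \emph{does} work, via the single-loop qDNS automaton $\delta(q,ab,\lambda)=\{q\}$ and exactly the same length-$1$-versus-length-$2$ contradiction the paper uses. One terminological fix for your step~(3): what is accepted at an intermediate stage is not a proper \emph{prefix} of the input but the concatenation of the read prefix with the read suffix; the paper is explicit about this.
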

\begin{proof}
The inclusion is clear from the definition. To show its properness, let us consider
the following qDFS sensing $5'\rightarrow 3'$ WK automaton:
 $(\{p,q\},\{a,b\},q,\{p,q\},\delta)$ with $\{p\}=\delta(q,aa,\lambda)$, $\{q\}=\delta(p,\lambda,bb)$. It accepts the language $\{a^{2n}b^{2n}~|~n\geq 0\}\cup\{a^{2(n+1)}b^{2n}~|~ n\geq 0\}$.

 On the other hand, it is clear that if a qDF1 sensing $5'\rightarrow 3'$ WK automaton
 accepts a word $w$ of length $k$, then its accepting computation contains exactly $k$ steps, and there are words in the accepted language by each positive integer length up to $k$ based on the given accepting computation in $w$ (composed by the read prefix and read suffix of the input word).
 Since, e.g., in the previous language, there is a word with length $2$, but there is no word with length $1$, clearly it cannot be accepted by any qDF1 sensing $5'\rightarrow 3'$ WK automata.
\end{proof}

Based on the automata, languages and arguments we used in the previous parts, the following result can also be obtained.
\begin{corollary}
The class of languages accepted by qDF1 sensing $5'\rightarrow 3'$ WK automata and
the class of languages accepted by qDN sensing $5'\rightarrow 3'$ WK automata
are incomparable under set-theoretic inclusion relation. \end{corollary}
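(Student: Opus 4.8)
The plan is to establish incomparability by exhibiting one witness language for each direction of non‑inclusion, reusing the constructions and arguments already developed above.

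For \textbf{qDF1 $\not\subseteq$ qDN} I would reuse the regular language $ba^*+a^*$ from Proposition~\ref{pr:N1-F1}. There it was shown to be accepted by a qDF1 sensing $5'\to 3'$ WK automaton, and, in the same proof, that no quasi‑deterministic stateless sensing $5'\to 3'$ WK automaton accepts it: such an automaton would need a loop transition on a word containing the letter $b$, and iterating it would produce accepted words with two or more occurrences of $b$. Hence $ba^*+a^*$ lies in qDF1 but not in qDN.

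For \textbf{qDN $\not\subseteq$ qDF1} I would use $\{a^nb^n \mid n\ge 0\}$. By Example~\ref{ex-ab} this language is accepted by a one‑state sensing $5'\to 3'$ WK automaton, which is quasi‑deterministic and stateless, i.e.\ a qDN automaton. To rule it out of qDF1 I would invoke the length argument used to prove that qDFS properly includes qDF1: a qDF1 automaton is $1$‑limited and all‑final, so an accepting computation on a word of length $k$ consists of exactly $k$ steps, and truncating it after $i$ steps yields an accepting computation on the length‑$i$ word formed by the already‑read prefix and the already‑read suffix (the truncation is still accepting precisely because every state is final). Consequently, if a qDF1 automaton accepts a word of length $k$ it accepts words of every length $0,1,\dots,k$; but $\{a^nb^n\}$ contains the length‑$2$ word $ab$ while containing no word of length $1$, so no qDF1 automaton accepts it. Combining the two directions yields the claimed incomparability.

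Since both ingredients are already in place, there is no real obstacle here; the only point meriting an explicit sentence is the observation that an initial segment of an accepting computation of an all‑final $1$‑limited automaton is itself an accepting computation on the consumed portion of the input, which is what makes the length constraint on qDF1 languages bite.
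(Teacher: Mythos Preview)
Your proposal is correct and follows essentially the same approach the paper intends: the paper presents this corollary with only the remark that it follows ``based on the automata, languages and arguments we used in the previous parts,'' and your two witness languages ($ba^*+a^*$ from Proposition~\ref{pr:N1-F1} and $\{a^nb^n\}$ from Example~\ref{ex-ab}) together with the length argument for qDF1 are exactly the ingredients already in place. Your added sentence making explicit why an initial segment of a qDF1 accepting computation is itself accepting is a helpful clarification the paper leaves implicit.
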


Based on Proposition \ref{1>REG} and the previous argument about qDF1 sensing $5'\rightarrow 3'$ WK automata,
we also infer the following inclusion:

\begin{corollary}
The class of languages accepted by qD1 sensing $5'\rightarrow 3'$ WK automata properly includes
the class of languages accepted by qDF1 sensing $5'\rightarrow 3'$ WK automata.
\end{corollary}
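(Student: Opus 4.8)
\textbf{The plan} is to dispatch the inclusion by a one‑line structural observation and then prove properness with a single, very simple witness language, quoting results already established above.

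For the inclusion I would simply note that being \emph{all‑final} ($Q=F$) is an extra restriction imposed on top of being $1$‑limited, so that every qDF1 sensing $5'\rightarrow 3'$ WK automaton is in particular a qD1 sensing $5'\rightarrow 3'$ WK automaton; the inclusion of the corresponding language classes follows at once.

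For properness I would exhibit a language accepted by some qD1 automaton but by no qDF1 one, and the natural choice is a tiny regular language, say $L=\{aa\}$. That $L$ lies in the qD1 class is immediate from Proposition \ref{1>REG} (a DFA for $L$, run with the right head never moving, is already a qD1 sensing $5'\rightarrow 3'$ WK automaton). That $L$ lies in no qDF1 class I would derive from the argument used in the proof of the preceding proposition about qDF1 automata: a qDF1 automaton accepting a word $w$ of length $k$ performs an accepting computation of exactly $k$ steps, each reading one letter by one of the two heads; since every state is final, the configuration reached after the first $i$ steps is already accepting, and concatenating the prefix already read by the left head with the suffix already read by the right head yields an accepted word of length $i$ for every $i=1,\dots,k$. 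Hence the language of a qDF1 automaton, once it contains a word of length $k$, contains words of every positive length up to $k$ — a property that $\{aa\}$ (a word of length $2$, none of length $1$) violates. As an alternative witness one could take $L'=b^*ab^*$: by Lemma \ref{lm-bab} it is accepted by no qDF automaton, hence by no qDF1 automaton, while being regular it lies in the qD1 class.

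I do not expect a genuine obstacle here; both directions are short and lean on statements already proved. The only point needing a little care is the bookkeeping in the length argument (tracking what each head has read after $i$ steps), together with choosing the witness so that the ``missing'' length is genuinely a positive one — which is exactly why a finite language whose shortest non‑empty word has length at least $2$, or $b^*ab^*$ via Lemma \ref{lm-bab}, is cleaner than a witness whose status would hinge on the treatment of the empty word.
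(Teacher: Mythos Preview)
Your proposal is correct and follows essentially the same route as the paper: the inclusion is immediate from the definitions, and properness is obtained by combining Proposition~\ref{1>REG} (REG is contained in the qD1 class) with the length argument about qDF1 automata from the preceding proposition. The paper leaves the witness implicit, while you spell out $\{aa\}$ (and, as an alternative, $b^*ab^*$ via Lemma~\ref{lm-bab}); both choices work for exactly the reasons you give.
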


\begin{proposition}
The class of languages accepted by qDFS sensing $5'\rightarrow 3'$ WK automata properly includes
the class of languages accepted by qDNS sensing $5'\rightarrow 3'$ WK automata.

The class of languages accepted by qDF sensing $5'\rightarrow 3'$ WK automata properly includes
the class of languages accepted by qDN sensing $5'\rightarrow 3'$ WK automata.
\end{proposition}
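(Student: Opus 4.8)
The plan is to handle both inclusions of the statement simultaneously, since each of them reduces to the single observation that ``stateless'' is a special case of ``all-final''. First I would record the trivial inclusions: any automaton with $Q=F=\{q_0\}$ in particular satisfies $Q=F$, and whether or not the simplicity restriction is imposed is independent of this; hence every qDNS sensing $5'\rightarrow 3'$ WK automaton is already a qDFS one, and every qDN sensing $5'\rightarrow 3'$ WK automaton is already a qDF one. This immediately yields the two inclusions of language classes, so only properness remains to be proven in both lines of the statement.

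For properness I would reuse the material developed around Proposition~\ref{pr:N1-F1}, taking $ba^*+a^*$ as the witness language. On the positive side, the proof of Proposition~\ref{pr:N1-F1} already provides a qDF1 sensing $5'\rightarrow 3'$ WK automaton accepting $ba^*+a^*$; since a qDF1 automaton is at the same time a qDFS automaton and a qDF automaton (1-limited implies simple, and all-final is common to both), this language lies in both of the larger classes occurring in the statement. On the negative side, the same proof already argues that no quasi-deterministic \emph{stateless} sensing $5'\rightarrow 3'$ WK automaton accepts $ba^*+a^*$: the one-letter word $b$ must be accepted, which forces the sole state to carry a loop transition reading a string that contains the letter $b$ on one of the heads, and iterating that loop produces an accepted word with two or more occurrences of $b$, contradicting membership in $ba^*+a^*$. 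As both qDNS and qDN automata are stateless, $ba^*+a^*$ is outside both smaller classes, and combining this with the positive part gives properness of both inclusions.

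The only point that needs a little care — and hence the main (mild) obstacle — is the negative half of the argument, because it must be phrased for WK automata rather than for ordinary finite automata. I would make sure the ``iterate the loop'' step is stated so that it covers a loop transition reading a nonempty string with \emph{either} head (or with both heads at once), and is not disturbed by possible $\lambda$-loops: if such a loop reads prefix $u$ with the left head and suffix $v$ with the right head, with $uv$ containing at least one $b$, then reading $u^n$ and $v^n$ is a legitimate accepting computation on a word with at least $n$ occurrences of $b$, so already $n=2$ breaks membership in $ba^*+a^*$. Everything else in the proof is routine bookkeeping with the definitions of the restrictions.
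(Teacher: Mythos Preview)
Your proposal is correct and follows essentially the same approach as the paper: the inclusions are immediate from the definitions, and properness is witnessed by the language $ba^*+a^*$, with the positive side given by the qDF1 automaton of Proposition~\ref{pr:N1-F1} and the negative side by the ``iterate the loop containing a $b$'' argument from that same proof. Your extra care in spelling out the loop argument for arbitrary string-reading transitions with either head is a harmless (and welcome) elaboration of the paper's one-line version.
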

\begin{proof}
 The inclusions come from the definitions, we need to prove only their properness.
 The language $ba^*+a^*$ used in the proof of Proposition \ref{pr:N1-F1} is accepted by a
 qDF1 sensing $5'\rightarrow 3'$ WK automaton, which is also a qDFS and a qDF sensing $5'\rightarrow 3'$ WK automaton. On the other hand, this language cannot be accepted by any qD stateless sensing $5'\rightarrow 3'$ WK automaton by the argument used in the second part of the proof of
 Proposition \ref{pr:N1-F1}.
\end{proof}

\begin{proposition}
The class of languages accepted by qDF sensing $5'\rightarrow 3'$ WK automata properly includes
the class of languages accepted by qDFS sensing $5'\rightarrow 3'$ WK automata.
\end{proposition}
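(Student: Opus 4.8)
The inclusion is immediate from the definitions: being \emph{simple} is an extra restriction imposed on top of being \emph{all-final}, so every qDFS sensing $5'\to 3'$ WK automaton is in particular a qDF one. All the work is in the properness, and the plan is to show that the language $L=\{a^nb^n \mid n\ge 0\}$ witnesses it. On the one hand, $L$ is accepted by the single-state, single-transition automaton $\{q\}=\delta(q,a,b)$ of Example~\ref{ex-ab}, which is a qDN and hence a qDF sensing $5'\to 3'$ WK automaton (and, tellingly, it is \emph{not} simple, since both heads move in its only transition). On the other hand, I would prove that no qDFS sensing $5'\to 3'$ WK automaton accepts $L$.

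The key tool is a partial-word closure property of all-final automata, extending the one-step observation already used in the proof of Lemma~\ref{lm-bab}: if $A$ is all-final and $(q_0,w)\Rightarrow^{*}(q,w'')$ is an initial segment of a computation in which the left head has so far consumed the prefix $x$ of $w$ and the right head the suffix $y$ (so that $w=xw''y$), then replaying exactly the same sequence of transitions on the shorter input $xy$ reaches $(q,\lambda)$, and since $q\in F$ this gives $xy\in L(A)$. I would establish this by induction on the length of the initial segment, maintaining the invariant that at each step the left-head transition reads a prefix of the still-unconsumed portion of $x$ and the right-head transition a suffix of the still-unconsumed portion of $y$; $\lambda$-steps read nothing and are harmless, and the situation in which the two heads have already ``met'' during the replay causes no difficulty since the remaining input is then exactly $\lambda$.

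With this in hand, suppose for contradiction that a qDFS automaton $A$ accepts $L$, let $k$ be the maximal length of a string read by a transition of $A$, fix $N>k$, and consider any accepting computation on $w=a^Nb^N\in L$. For each configuration $(q,w'')$ occurring along it with $w''\ne\lambda$, writing $x$ and $y$ for what the two heads have read so far, closure yields $xy\in L$; since $x$ is a prefix and $y$ a suffix of $a^Nb^N$ while $w''$ is a nonempty middle factor, a short case analysis forces $x=a^{|x|}$, $y=b^{|y|}$ and $|x|=|y|$ (every other shape would force $|w''|=0$). Now take the first step of the computation that reads at least one letter; such a step exists, for otherwise the input is never consumed. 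Because $A$ is simple, this step moves a single head and reads a nonempty string $s$. If after this step the remaining input is empty, then $s$ is all of $w$, so $|s|=2N>k$, which is impossible; and if it is nonempty, then in the resulting configuration the numbers of $a$'s and $b$'s read so far would have to coincide, forcing $|s|=0$, again impossible. This contradiction completes the proof.

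The step I expect to be the main obstacle is phrasing the partial-word closure property carefully enough that $\lambda$-transitions, and configurations in which the two heads have already met, do not disrupt the replay argument; once that lemma is stated correctly, the remaining combinatorics over $\{a^nb^n\}$ is routine bookkeeping.
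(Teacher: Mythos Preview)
Your argument is correct, and it takes a genuinely different route from the paper. The paper uses the language $aaa(ab)^*bbb$ as its separator: a two-state qDF automaton with transitions $(aaa,bbb)$ and $(ab,\lambda)$ accepts it, while on the qDFS side the paper argues only about the very first non-$\lambda$ transition on the shortest word $aaabbb$, using the one-step closure (any string read in a single step from the initial state must itself be accepted) to derive a contradiction.

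Your choice of $\{a^nb^n\}$ is simpler and more natural, and your partial-word closure lemma is the right multi-step generalisation of the observation the paper uses in Lemma~\ref{lm-bab}. The induction you outline goes through cleanly: writing the original computation as $(q_0,w)\Rightarrow\cdots\Rightarrow(q_t,w'')$ with step $i$ using $(u_i,v_i)$, one has $x=u_1\cdots u_t$ and $y=v_t\cdots v_1$, and on input $xy$ the same transitions apply because at stage $i$ the remaining word is exactly $u_{i+1}\cdots u_t\,v_t\cdots v_{i+1}$; $\lambda$-steps and the ``heads already met'' worry are harmless since $|u_{i+1}|+|v_{i+1}|\le|u_{i+1}\cdots u_t|+|v_t\cdots v_{i+1}|$ always holds. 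The endgame is then exactly as you describe: the invariant $|x|=|y|$ at every non-final configuration is incompatible with a simple step that moves only one head, unless that single step consumes the whole input, which $N>k$ forbids. One pleasant by-product worth noting is that your impossibility argument never invokes quasi-determinism at all, so you in fact prove the stronger statement that no all-final \emph{simple} sensing $5'\to 3'$ WK automaton (quasi-deterministic or not) accepts $\{a^nb^n\}$; the paper's argument for $aaa(ab)^*bbb$ has the same feature.
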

\begin{proof}
  The inclusion is straightforwardly coming from the definition, we shall prove only its properness.
Consider the separating language $aaa(ab)^*bbb$. %
On the one hand, the
qDF sensing $5'\rightarrow 3'$ WK automaton  $(\{p,q\},\{a,b\},q,\{p,q\},\delta)$ with $\{p\}=\delta(q,aaa,bbb)$, $\{p\}=\delta(p,ab,\lambda)$ %
 accepts it.
On the other hand, in a %
qDFS automaton, in a transition from the initial state only one of the heads can read some input symbols. The shortest nonempty word of the language is $aaabbb$, thus either this is read by one of the heads, or if shorter word is read, then the automaton will accept that prefix or suffix of $aaabbb$ leading to a contradiction.
However, if the whole $aaabbb$ is read by one of the heads, then this must be the prefix (if the left head is used) or the suffix (if the right head is used) of the other words that are accepted. Since this is not true for the language, there could not be any qDFS automaton that accepts it.
\end{proof}

By definition of the restricted variants, we also know the following:
\begin{corollary}
The class of languages accepted by qDS sensing $5'\rightarrow 3'$ WK automata includes
the class of languages accepted by qD1 sensing $5'\rightarrow 3'$ WK automata.
\end{corollary}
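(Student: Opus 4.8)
The plan is to prove the inclusion at the level of the automaton models and then pass to the accepted languages. First I would unwind the two relevant restrictions from the definitions: a sensing $5'\to 3'$ WK automaton is $1$-limited when $\delta:(Q\times((\lambda,T)\cup(T,\lambda)))\to 2^Q$, i.e.\ exactly one input letter is consumed in each step, whereas it is simple when $\delta:(Q\times((\lambda,T^*)\cup(T^*,\lambda)))\to 2^Q$, i.e.\ at most one head moves per step. Since $(\lambda,T)\cup(T,\lambda)\subseteq(\lambda,T^*)\cup(T^*,\lambda)$, any transition mapping of the first form is already of the second form, so every $1$-limited WK automaton is, literally and without any modification, a simple WK automaton.

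The second step is to observe that quasi-determinism is a property of the configuration-transition relation $\Rightarrow$ alone --- it asks only that whenever $(q,w)\Rightarrow(p,u)$ and $(q,w)\Rightarrow(r,v)$ one has $p=r$ --- and is therefore completely orthogonal to how many heads or letters a single transition involves. Hence if $A$ is a qD$1$ sensing $5'\to 3'$ WK automaton, then, regarded as a simple WK automaton, it is still quasi-deterministic; that is, $A$ is a qDS sensing $5'\to 3'$ WK automaton. Consequently $L(A)$ is accepted by a qDS automaton for every qD$1$ automaton $A$, which is exactly the asserted inclusion, and this is all the corollary claims.

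There is essentially no obstacle here: the whole content is the set inclusion between the two transition-mapping formats plus the remark that quasi-determinism does not interact with the S/$1$ restrictions (the same reasoning already underlies, e.g., the earlier corollary identifying the N class with the qDN class and its variants). The only point worth flagging --- and the reason the statement is phrased as an inclusion rather than a proper one --- is that strictness is not asserted: separating qD$1$ from qDS would require a language accepted by some qDS but by no qD$1$ automaton, and the length-counting argument used in the all-final case (where qDFS properly includes qDF1) relies on finality, so it would not transfer verbatim; I would leave the properness question open or defer it.
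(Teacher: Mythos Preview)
Your argument is correct and mirrors exactly what the paper does: the corollary is stated as a direct consequence of the definitions, since every $1$-limited automaton is in particular simple and quasi-determinism is independent of these structural restrictions. The paper likewise leaves the properness of this inclusion open, so your closing remark matches the original treatment.
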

We left open whether the inclusion in the above statement is proper.

\begin{proposition}
The class of languages accepted by qDS sensing $5'\rightarrow 3'$ WK automata properly includes
the class of languages accepted by qDFS sensing $5'\rightarrow 3'$ WK automata.
\end{proposition}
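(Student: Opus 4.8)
The plan is to prove the statement in two parts: first the inclusion, then the properness. For the inclusion, I would simply observe that every qDFS sensing $5'\rightarrow 3'$ WK automaton is, by definition, a qDS sensing $5'\rightarrow 3'$ WK automaton (being all-final is an extra restriction), so the language class of the former is contained in that of the latter. This requires no work beyond citing the definitions of the restricted variants given in Section~\ref{s:pre}.

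For the properness, the approach is to exhibit a separating language: one accepted by a qDS sensing $5'\rightarrow 3'$ WK automaton but not by any qDFS one. A natural candidate is a language whose shortest nonempty word cannot be decomposed as a (prefix, suffix) pair of a longer accepted word, or — more in the spirit of the previous proofs — a language for which being all-final forces spurious acceptances. I would take a language such as $aab^*$, or the palindrome-flavoured $\{a^n c a^n \mid n \geq 0\}$, and give an explicit qDS automaton with a non-final ``dead-end-avoiding'' intermediate state (the use of non-final states being exactly what distinguishes qDS from qDFS). Concretely, for $aab^*$ one can use states $\{q_0, p, r\}$ with $q_0$ the initial and $r$ the only final state, transitions $\{p\}=\delta(q_0,a,\lambda)$, $\{r\}=\delta(p,a,\lambda)$, $\{r\}=\delta(r,b,\lambda)$ (all transitions with the left head only, so this is an S automaton that is actually just a DFA viewed as a WK automaton, hence quasi-deterministic). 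Then I would argue by contradiction, mimicking the proof of Lemma~\ref{lm-bab} and the preceding propositions: if a qDFS automaton accepted this language, then since all its states are final, the prefix-and-suffix read in the first step (or in any proper initial segment of the computation) would itself form an accepted word; choosing a word long enough relative to the radius of the automaton, this intermediate word would be shorter and outside the language, a contradiction.

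The main obstacle is choosing the separating language so that the all-final argument actually bites — that is, ensuring that no qDFS automaton can ``hide'' the forced intermediate acceptances. The language must contain words whose every nonempty prefix-plus-suffix combination (under the splitting induced by a two-head reading) is either the empty word or lies outside the language; languages like $aab^*$ or $\{a^{2n}ba^{2n}\}$ work because their short members are rigid. I would also need to double-check that the exhibited qDS automaton is genuinely quasi-deterministic (each configuration's next state is unique) and genuinely needs a non-final state, so that it is not already a qDFS automaton; using an automaton that reads only with one head and has a DFA-like structure with a necessary rejecting/non-final state settles this cleanly. Once the separating language and its qDS automaton are fixed, the impossibility argument for qDFS is a routine adaptation of the techniques already used in this subsection, so I do not expect further difficulty there.
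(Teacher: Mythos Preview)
Your overall strategy matches the paper's: the inclusion is immediate from the definitions, and properness is shown via a separating language together with the ``all-final forces spurious acceptance'' argument of Lemma~\ref{lm-bab}. The paper uses $b^*ab^*$ as the separator, giving an explicit qD1 (hence qDS) automaton with a non-final state and then invoking Lemma~\ref{lm-bab} directly to rule out any qDF (a fortiori qDFS) automaton.

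There is, however, a concrete gap in your worked-out example. The language $aab^*$ does \emph{not} separate qDS from qDFS: the two-state all-final simple automaton $(\{q,p\},\{a,b\},q,\{q,p\},\delta)$ with $\{p\}=\delta(q,aa,\lambda)$ and $\{p\}=\delta(p,b,\lambda)$ is a qDFS automaton accepting exactly $aab^*$ (modulo $\lambda$). The point you overlooked is that a simple WK automaton may read a \emph{string}, not just a single letter, in one step; reading the whole block $aa$ in the first transition means the only ``intermediate word'' produced is $aa$, which \emph{is} in the language. In other words, $aab^*$ fails your own stated criterion that every nonempty prefix-plus-suffix combination of a long word must lie outside the language: the prefix $aa$ (with empty suffix) lies inside.

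Your alternative candidates $\{a^nca^n\mid n\ge 0\}$ and $\{a^{2n}ba^{2n}\mid n\ge 0\}$ do work, for exactly the reason the paper's $b^*ab^*$ works: in a long word the first $\ell$ and last $\ell$ symbols contain no occurrence of the distinguished middle letter, so the first reading step of any all-final simple automaton necessarily accepts a nonempty word from $a^*$, which is outside the language. If you swap $aab^*$ for one of these (or simply reuse $b^*ab^*$ and cite Lemma~\ref{lm-bab}), your proof goes through and is essentially the paper's argument.
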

\begin{proof}
  The inclusion is a direct consequence of the definition, we shall prove only its properness.
  Let us consider the language $b^*ab^*$. As, it is a regular language, on the one hand, it is easy to give a qD1 sensing $5'\rightarrow 3'$ WK automaton that accepts it (and this automaton is also a qDS sensing $5'\rightarrow 3'$ WK automaton at the same time):
  $(\{p,q\},\{a,b\},q,\{p\},\delta)$ with $\{q\}=\delta(q,b,\lambda)$, $\{p\}=\delta(q,a,\lambda)$ and $\{p\}=\delta(p,b,\lambda)$.

  However, on the other hand, the language $b^*ab^*$ cannot be accepted by any
  qDF sensing $5'\rightarrow 3'$ WK automata (and thus cannot be accepted by any
   qDFS sensing $5'\rightarrow 3'$ WK automata) as we have seen in Lemma \ref{lm-bab}.
\end{proof}

Finally, we state a relation concerning 2detLIN, the class of languages accepted by deterministic sensing $5'\rightarrow 3'$ WK automata.
\begin{theorem}
The class of languages accepted by qD1 sensing $5'\rightarrow 3'$ WK automata  includes
the class 2detLIN of languages.
\end{theorem}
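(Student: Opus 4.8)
The plan is to turn a given deterministic sensing $5'\to3'$ WK automaton $A=(T,Q,q_0,F,\delta)$ with $L(A)=L$ (which exists because $L\in{}$2detLIN) into a $1$-limited sensing $5'\to3'$ WK automaton $A'$ that is deterministic --- hence quasi-deterministic by the corresponding Proposition above --- and accepts $L$ modulo the empty word. As in the proof of the earlier theorem I would assume that no state of $A$ has as its only outgoing transition one that reads nothing; together with determinism this already forces $A$ to have no $(\lambda,\lambda)$-transition at all, because a $(\lambda,\lambda)$-transition leaving a state $q$ is enabled in some configuration simultaneously with every other transition leaving $q$, and if it were the unique transition leaving $q$ it would contradict the assumption. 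Consequently every transition of $A$ consumes at least one input letter, so $A$ halts on every input.

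The key structural fact I would isolate is: for any two \emph{distinct} transitions $p_1\in\delta(q,u_1,v_1)$ and $p_2\in\delta(q,u_2,v_2)$ leaving the same state $q$ of the deterministic automaton $A$, either $u_1,u_2$ are prefix-incomparable, or $v_1,v_2$ are suffix-incomparable. (Distinctness gives $(u_1,v_1)\neq(u_2,v_2)$, since two transitions with equal $(u,v)$ would produce two different next configurations from $(q,u_1wv_1)$.) Indeed, if, say, $u_1$ is a prefix of $u_2$ and one of $v_1,v_2$ is a suffix of the other, then for a sufficiently long word $w$ having $u_2$ as a prefix and the longer of $v_1,v_2$ as a suffix, both transitions are enabled at $(q,w)$, contradicting the determinism of $A$.

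Now I would let $A'$ simulate $A$ transition by transition, realising a transition $p\in\delta(q,u,v)$ by a chain of exactly $|u|+|v|$ single-letter steps that consume the letters of $u$ from the front with the left head and the letters of $v$ from the back with the right head. Besides the states of $A$, $A'$ has \emph{intermediate} states $(q,x,y)$, meaning ``in the copy of $A$-state $q$ the left head has already read the prefix $x$ and the right head the suffix $y$ of the block still to be read''. Call a transition $p_t\in\delta(q,u_t,v_t)$ of $A$ \emph{live} at $(q,x,y)$ if $x$ is a prefix of $u_t$ and $y$ a suffix of $v_t$. The behaviour of $A'$ at $(q,x,y)$ (with $q$ itself playing the role of $(q,\lambda,\lambda)$) is: if there is a live $t$ with $x=u_t$ and $y=v_t$, perform the $A$-move to $p_t$ --- this case is always folded into the single-letter step just taken, so that $A'$ needs no $\lambda$-transitions; otherwise, if no live transition has $x$ equal to its $u_t$, read one letter with the left head; otherwise read one letter with the right head; in both cases move to the updated intermediate state, rejecting (transition left undefined) if no transition stays live. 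Determinism of $A$ enters exactly here: the rule is never blocked, because a block would require, simultaneously, a live $t_1$ with $x=u_{t_1}$ (so reading left would overshoot) and a live $t_2$ with $y=v_{t_2}$ (so reading right would overshoot) with neither ready to fire; then $u_{t_1}=x$ is a proper prefix of $u_{t_2}$ while $v_{t_2}=y$ is a proper suffix of $v_{t_1}$, so $t_1,t_2$ are distinct transitions leaving $q$ with comparable first and comparable second components, contradicting the structural fact. The same fact shows that when a transition becomes ready to fire it is the unique live one and is exactly the one enabled by $A$ in the current configuration, and that the transition actually enabled by $A$ stays live throughout, so its block is consumed letter by letter without overshoot and fired at the end; moreover whenever ``some live transition has $x=u_t$'' (the branch that reads right) the structural fact forces every live transition to have $y$ a proper suffix of its $v_t$, so reading right is legitimate. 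Hence $A'$ is deterministic and $1$-limited, and a routine induction on the number of $A$-steps shows $L(A')=L(A)$ modulo the empty word (after discarding states not on any accepting computation, to meet the usual convention). Since a deterministic sensing $5'\to3'$ WK automaton is quasi-deterministic, $A'$ witnesses $L\in{}$\{languages accepted by qD$1$ sensing $5'\to3'$ WK automata\}, giving the claimed inclusion.

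The main obstacle, and the place I would spend the most care, is exactly the design of the single-letter reading order in $A'$ so that it is unambiguous and never gets stuck on a word of $L$. A naive expansion that replaces each multi-letter transition by its own private chain of single-letter reads fails already to be quasi-deterministic: two transitions out of $q$ that share a common prefix of left-reads but need different right-reads would force a branch into two genuinely different states. It is precisely the prefix/suffix-incomparability consequence of determinism that makes a conflict-free, deterministic reading order exist, and checking the ``never blocked, never overshoots, never fires the wrong transition'' claims against that fact is the bulk of the work.
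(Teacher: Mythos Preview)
Your proof is correct, but it takes a substantially more laborious route than the paper. The paper's argument is two lines: it cites the result of \cite{ActInf2020} that every language in 2detLIN is already accepted by a \emph{deterministic} 1-limited sensing $5'\to3'$ WK automaton, and then observes (as you do at the end) that every D1 automaton is trivially qD1. You, by contrast, reconstruct from scratch the conversion of an arbitrary deterministic sensing $5'\to3'$ WK automaton into a deterministic $1$-limited one, which is precisely the content of the cited result.

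What you do differently is therefore not a different proof strategy but a self-contained re-derivation of the cited normal-form theorem. Your key structural observation---that for any two distinct transitions out of the same state of a deterministic WK automaton, the left-read words are prefix-incomparable or the right-read words are suffix-incomparable---is exactly the combinatorial core that makes the letter-by-letter expansion deterministic, and your ``read left unless some live transition has its left component already matched, else read right'' scheduling rule, together with the verification that it never blocks, never overshoots, and always fires the unique enabled transition, is a clean way to carry this out. The payoff of your approach is that the argument stands on its own without appeal to \cite{ActInf2020}; the payoff of the paper's approach is brevity, since the heavy lifting has been done elsewhere. Both arrive at the same intermediate object (a D1 automaton) before invoking the inclusion $\text{D}\subseteq\text{qD}$.
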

\begin{proof}
  It has been shown in \cite{ActInf2020} that 2detLIN is also accepted by the class of deterministic 1-limited sensing $5'\rightarrow 3'$ WK automata.
On the other hand, as an implication of their definitions, each D1 sensing $5'\rightarrow 3'$ WK automaton is a
qD1 sensing $5'\rightarrow 3'$ WK automaton.
\end{proof}

In fact, D1 sensing $5'\rightarrow 3'$ WK automata are those
qD1 sensing $5'\rightarrow 3'$ WK automata in which exactly one of the heads are allowed to read in each state.
 As one may easily design a
qD1 sensing $5'\rightarrow 3'$ WK automaton that does not satisfy this property, this latter model may be more powerful than the deterministic sensing $5'\rightarrow 3'$ WK automata. We leave this question open.

\section{Conclusions}
\begin{figure}[tb]
    \centering
       \ \ \ \includegraphics[scale=.8]{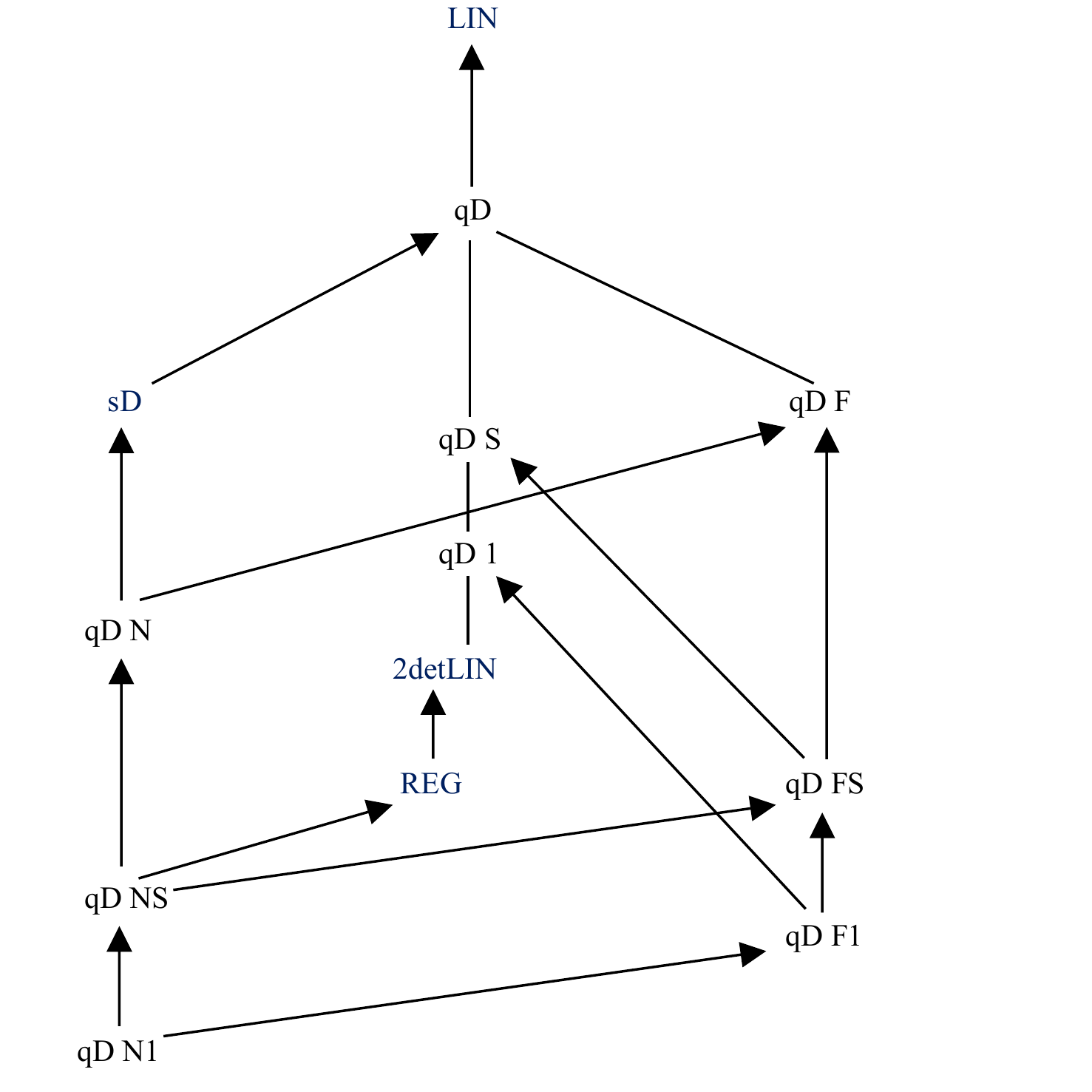} 
            \caption{A hierarchy of the language classes of quasi-deterministic sensing $5' \rightarrow 3'$ WK automata in a Hasse diagram. The abbreviations qD and sD refer to the language classes accepted by state-deterministic and quasi-deterministic $5' \rightarrow 3'$ WK automata; S, F, N, 1 and their combinations are used to
            abbreviate the restricted variants of the qD sensing $5' \rightarrow 3'$ WK automata.
            Arrows show proper inclusions, while lines without arrow head show inclusions where the properness is left open. Blue color shows other related language classes.}
    \label{hasseFULL}
\end{figure}

We have considered a kind of generalisation of determinism in the case of finite state machines. The quasi-determinism may allow such nondeterminism that is based on the input being processed during a computation step. In a quasi-deterministic automaton, the state of the next configuration in the computation is determined, but the next configuration itself may not be.
This new type of determinism also generalise the recently introduced  state-determinism.
 We have shown that quasi-determinism is entirely the same as determinism in the case of $\lambda$-transition free NFA. On the other hand, for WK automata, especially, for sensing $5'\to 3'$ WK automata, because of the two heads and the string-reading feature, it is more interesting.
Knowing that the family of sensing $5'\to 3'$ WK automata accept the family of linear context-free languages, we have shown that a new sublinear  language class is accepted
by the new model, that is a proper superclass of the class 2detLIN of languages accepted by deterministic sensing $5'\to 3'$ WK automata.
We have also studied various restricted classes, and proved various hierarchy results among them, for their summary see Figure \ref{hasseFULL}. The properness of some inclusions, as well as most of the other properties of the defined language classes, including, e.g.,
 closure properties, are left for the future.
 The quasi-determinism may also be expanded to various other types of automata, including
 models based on $5'\to 3'$ WK automata, e.g., automata with multiple runs \cite{Leupold}, jumping $5'\to 3'$ WK automata \cite{NCMA-53jump,AcIn-53jump}, 2-head/linear automata with translucent letters \cite{2h-tr,lin-tr} and
$5'\to 3'$ WK transducers \cite{WK-trans,WK-trans2}.

\section*{Acknowledgements}
Comments of the anonymous reviewers are gratefully acknowledged.

\providecommand{\urlalt}[2]{\href{#1}{#2}}
\providecommand{\doi}[1]{doi:\urlalt{http://dx.doi.org/#1}{#1}}

\end{document}